% !TEX root = ./morethanq.tex

% Intended LaTeX compiler: pdflatex
%\documentclass[preprint]{iacrtrans}
\documentclass{llncs}
\pagestyle{plain}

\usepackage{amsmath}
\usepackage{amsfonts}
\usepackage{amssymb}
\usepackage{braket}
\usepackage{booktabs}

\usepackage{algorithm}
\usepackage{algpseudocode}
\usepackage{algorithmicx}
\usepackage{tikz}
\usepackage{pgfplots}
\usepackage{multirow}
\usepackage{enumerate}

%\useackage{graphicx}
%\usepackage[T1]{fontenc}
%\usePackage{enumerate}
\usepackage{xcolor}
\usepackage{xifthen}

%===========================
\newcommand{\zo}{\{0, 1\}}

\newcommand{\bigO}[1]{\mathcal{O}
\ifthenelse{\isempty{#1}}{}{\! \left( #1 \right)}}
\newcommand{\bigOt}[1]{\widetilde{\mathcal{O}}
\ifthenelse{\isempty{#1}}{}{\! \left( #1 \right)}}

\newcommand{\pr}{\mathbf{Pr}} % Pr, probability.
\newcommand{\randfrom}{\xleftarrow{~\$~}}
\newcommand{\adv}{\mathbf{Adv}} % Advantage in security games
\newcommand{\secnot}[1]{\textsf{#1}} % notion of security (PRP, PRF, etc)
\newcommand{\alginput}{\State \textbf{input:} }
\newcommand{\algoutput}{\State \textbf{output:} }

\newcommand{\floor}[1]{\left\lfloor #1 \right\rfloor}

\usepackage{tikz}
\usetikzlibrary{shapes.geometric,shapes.misc,positioning}
\tikzstyle{block}=[draw,minimum size=2em]
\tikzstyle{l}=[minimum size=1.5em]
\tikzstyle{xor}=[draw, minimum size=0.8em,append after command={[shorten >=\pgflinewidth, shorten <=\pgflinewidth,] 
        (\tikzlastnode.north) edge (\tikzlastnode.south)
        (\tikzlastnode.east) edge (\tikzlastnode.west)
}]
\tikzstyle{sponge}=[rectangle, rounded corners=.25cm, minimum width=.5cm, minimum height=1.8cm, draw]

%============================

\usepackage[colorlinks]{hyperref}
% the combination of hyperref & lncs broke for me,
% so I pass the \autoref commands to cleveref instead
%\usepackage{cleveref}

%\newcommand{\autoref}[1]{\Cref{#1}}

\def\subsectionautorefname{Section}
\def\sectionautorefname{Section}

\def\subsectionautorefname{Section}

% comments
\newcommand{\AS}[1]{\textcolor{red}{[{\bf Andr\'{e} S:} #1]}}
\newcommand{\FS}[1]{\textcolor{red}{[{\bf Ferdinand S:} #1]}}

\renewcommand{\AS}[1]{}
\renewcommand{\FS}[1]{}

\newif\ifeprint
\eprinttrue

\renewcommand{\appendixname}{%
\ifeprint%
Appendix%
\else%
Supplementary Material%
\fi%
}
\newcommand\appendixref[1]{{\let\subsectionautorefname\appendixname\let\sectionautorefname\appendixname\autoref{#1}}}

%=========================

\title{Beyond quadratic speedups in quantum attacks on symmetric schemes}
\ifeprint
\author{Xavier Bonnetain\inst{1,2} \and André Schrottenloher\inst{3} \and Ferdinand Sibleyras\inst{4}}
\institute{Institute for Quantum Computing, Department of Combinatorics and Optimization, University of Waterloo, Waterloo, ON, Canada \and Université de Lorraine, CNRS, Inria, Nancy, France\and Cryptology Group, CWI, Amsterdam, The Netherlands \and NTT Social Informatics Laboratories}
\else
\fi
%==========================

%==========================
\begin{document}

\maketitle
\renewcommand{\labelitemi}{$\bullet$}

\begin{abstract}
In this paper, we report the first quantum key-recovery attack on a symmetric block cipher design, using classical queries only, with a more than quadratic time speedup compared to the best classical attack.

We study the 2XOR-Cascade construction of Ga{\v{z}}i and Tessaro (EUROCRYPT~2012). It is a key length extension technique which provides an $n$-bit block cipher with $\frac{5n}{2}$ bits of security out of an $n$-bit block cipher with $2n$ bits of key, with a security proof in the ideal model. We show that the offline-Simon algorithm of Bonnetain et al. (ASIACRYPT~2019) can be extended to, in particular, attack this construction in quantum time $\bigOt{2^n}$, providing a 2.5 quantum speedup over the best classical attack.

Regarding post-quantum security of symmetric ciphers, it is commonly assumed that doubling the key sizes is a sufficient precaution. This is because Grover's quantum search algorithm, and its derivatives, can only reach a quadratic speedup at most. Our attack shows that the structure of some symmetric constructions can be exploited to overcome this limit. In particular, the 2XOR-Cascade cannot be used to generically strengthen block ciphers against quantum adversaries, as it would offer only the same security as the block cipher itself.
\end{abstract}

\keywords{Post-quantum cryptography, quantum cryptanalysis, key-length extension, 2XOR-Cascade, Simon's algorithm, quantum search, offline-Simon.}

% !TEX root = ./morethanq.tex

\section{Introduction}\label{section:introduction}

In~1994, Shor~\cite{DBLP:conf/focs/Shor94} designed polynomial-time quantum algorithms for factoring and computing discrete logarithms, both believed to be classically intractable. This showed that a large-scale quantum computer could break public-key cryptosystems based on these problems, such as RSA and ECC, which unluckily are the most widely used to date.

The impact of quantum computers on secret-key cryptography is, at first sight, much more limited. The lack of \emph{structure} in secret-key cryptosystems seems to defeat most exponential quantum speedups. It can be expected to do so, as it was shown in~\cite{DBLP:journals/jacm/BealsBCMW01} that relatively to an oracle, quantum speedups for worst-case algorithms can be polynomial at most, unless the oracle satisfies some additional structure. This structure, that is primordial for exponential speedups, is usually known as a ``promise''. For example, in Shor's abelian period-finding algorithm~\cite{DBLP:conf/focs/Shor94}, the promise is that the oracle is a periodic function.

%As shown in~\cite{DBLP:journals/jacm/BealsBCMW01}, any quantum algorithm solving a worst-case problem relatively to an oracle 

%: any worst-case quantum algorithm making only black-box queries to its input cannot reach more than a polynomial speedup~\cite{DBLP:journals/jacm/BealsBCMW01}, unless some \emph{promise} on this input holds. This promise is expected in public-key cryptography due to the use of trapdoors, but not in the secret-key setting.

Another well-known quantum algorithm, Grover's quantum search~\cite{DBLP:conf/stoc/Grover96}, can speed up an exhaustive key search by a quadratic factor. That is, an attacker equipped with a quantum computer can find the $\kappa$-bit key of a strong block cipher in about $\bigO{2^{\kappa/2}}$ operations instead of the $\bigO{2^\kappa}$ trials necessary for a classical attacker. Despite being merely polynomial, this is already an interesting advantage for this hypothetical attacker. Due to Grover's search, symmetric cryptosystems are commonly assumed to retain roughly half of their classical bits of security, and it is recommended to double their key length when aiming at post-quantum security~\cite{NAP25196}.

\paragraph{Superposition Attacks.}
In~\cite{DBLP:conf/isit/KuwakadoM10}, Kuwakado and Morii designed a polynomial-time \emph{quantum distinguisher} on the three-round Luby-Rackoff construction, although it has a classical security proof. Later on, they showed a polynomial-time key-recovery attack on the Even-Mansour construction~\cite{DBLP:conf/isita/KuwakadoM12}, a classically proven block cipher constructed from a public permutation~\cite{DBLP:journals/joc/EvenM97}.

Both of these attacks can assume ideal building blocks (random functions in the case of the Luby-Rackoff construction, a random permutation in the case of Even-Mansour), as they focus on the \emph{algebraic structure} of the construction. The target problem (distinguishing or key-recovery) is simply reduced to the problem of finding the hidden period of a periodic function, which can be solved efficiently.

However, in order to run these attacks, the quantum adversary needs to access the construction as a \emph{quantum oracle} (in \emph{superposition}). This means that the black-box must be part of a quantum circuit. When it comes to provable security in the quantum setting, this is a natural assumption, followed by most of the works in this direction (see~\cite{DBLP:conf/pqcrypto/AnandTTU16,DBLP:conf/crypto/0001Y17} for instance). However, it does not seem too hard to avoid quantum queries at the implementation level\footnote{Though it seems also impossible in some restricted cases, for example \emph{white-box encryption}. Here the adversary tries to recover the key of a block cipher whose specification is completely given to him. He can realize the quantum oracle using this specification.}.

Many other symmetric constructions have been shown to be broken under superposition queries in the past few years~\cite{DBLP:conf/crypto/KaplanLLN16,DBLP:conf/sacrypt/Bonnetain17,DBLP:conf/asiacrypt/Leander017,DBLP:conf/sacrypt/BonnetainNS19}. All these attacks have been exploiting the algebraic structure of their targets in similar ways, using different period or shift-finding algorithms.

%follow the same layout: the algebraic structure of the construction targeted becomes the \emph{promise} that allows an exponential (or at least, more than quadratic) quantum speedup.

%Indeed, although they do not contain a trapdoor, these systems exhibit a strong algebraic structure, because primitives such as permutations or small block ciphers are used as building blocks in larger constructions, that offer higher functionalities. One can think of authenticated encryption schemes built over a block cipher, or of Duplex constructions built on permutations.

\paragraph{Attacks based on Quantum Search.}
\emph{Quantum search}, the equivalent of classical exhaustive search, is a very versatile tool that allows to design many algorithms beyond a mere exhaustive search of the key. However, by only combining quantum search with itself, one cannot obtain a better speedup than quadratic. More precisely\footnote{For completeness, we include a short proof of this claim in \appendixref{app:ext-qsearch}.}:

\begin{quotation}
Let $\mathcal{A}$ be a quantum algorithm, with a final measurement, that is built by combining a constant number of quantum search procedures. Let $\mathcal{T}$ be its time complexity and $\mathcal{M}$ its memory complexity. Then there exists a classical randomized algorithm $\mathcal{A}'$ that returns the same results using $\mathcal{M}$ memory and time $\bigO{\mathcal{T}^2}$.
\end{quotation}

In other words, if our only quantum algorithmic tool is quantum search, then any quantum attack admits an equivalent classical attack of squared complexity, and that uses a similar memory. In particular, if the quantum procedure goes below the exhaustive key search ($\bigO{2^{\kappa/2}}$), then the corresponding classical procedure can be expected to go below the classical exhaustive key search ($\bigO{2^\kappa}$).
% Quantum walks (see e.g.~\cite{DBLP:journals/siamcomp/MagniezNRS11}) suffer from the same limitation.

\paragraph{Attacks beyond Quantum Search.}
So far, when superposition queries are forbidden, all known quantum attacks on symmetric designs (e.g., key-recovery attacks on block ciphers, forgery attacks on MACs) have only been confirmed to reach time speedups less than (or equal to) quadratic: the best that quantum search, and other extended frameworks~\cite{DBLP:journals/siamcomp/MagniezNRS11}, can offer.

% on combinations of quantum search and other extended frameworks (e.g. MNRS quantum walks~\cite{DBLP:journals/siamcomp/MagniezNRS11}).

%quantum attacks on symmetric designs (in particular, block ciphers) have been divided into two incompatible worlds:
%\begin{itemize}
%\item Superposition attacks are the only capable to achieve beyond-quadratic speedups, but they crucially rely on quantum queries;
%\item Attacks based on quantum search (and other extended frameworks, such as MNRS quantum walks~\cite{DBLP:journals/siamcomp/MagniezNRS11}) reach a quadratic speedup at best, \emph{with or without quantum queries}.
%\end{itemize}

At ASIACRYPT~2019, Bonnetain \emph{et al.}~\cite{DBLP:conf/asiacrypt/BonnetainHNSS19} presented new attacks on the Even-Mansour and FX block ciphers that somehow went ``beyond quantum search only''. Their algorithm combines Simon's algorithm~\cite{DBLP:journals/siamcomp/Simon97a} and quantum search, inspired by an attack of Leander and May~\cite{DBLP:conf/asiacrypt/Leander017}. In some scenarios, it allows to reach a quadratic speedup \emph{and} an asymptotic memory improvement at the same time. For example, they obtained an attack on an $n$-bit Even-Mansour cipher, with $2^{n/3}$ classical queries, in quantum time $\bigOt{2^{n/3}}$, and memory $\mathsf{poly}(n)$, instead of a classical attack with time $\bigO{2^{2n/3}}$ \emph{and memory $2^{n/3}$}.

%With a combination of quantum search and Simon's algorithm~\cite{DBLP:journals/siamcomp/Simon97a}, inspired by an attack of Leander and May~\cite{DBLP:conf/asiacrypt/Leander017}, they obtained improved quantum time-memory trade-offs \emph{without superposition queries}, yet making use of the algebraic structure of the scheme. In particular, they obtained an attack on an $n$-bit Even-Mansour cipher, with $2^{n/3}$ classical queries, in quantum time $\bigOt{2^{n/3}}$, and memory $\mathsf{poly}(n)$, instead of a classical attack with time $\bigO{2^{2n/3}}$ \emph{and memory $2^{n/3}$}. Thus this attack already goes ``beyond quantum search'' since it obtains a quadratic time speedup \emph{and} an exponential improvement in memory at the same time.

\paragraph{Contributions of this Paper.}
In this paper, we show that the \textsf{offline-Simon} algorithm of~\cite{DBLP:conf/asiacrypt/BonnetainHNSS19} can be extended to attack some symmetric constructions with a (provable) quantum time speedup 2.5. Our main example is the \emph{double-XOR Cascade construction} ({\sf 2XOR} in what follows) of~\autoref{fig:2xor}, introduced by Gazi and Tessaro~\cite{DBLP:conf/eurocrypt/GaziT12}.

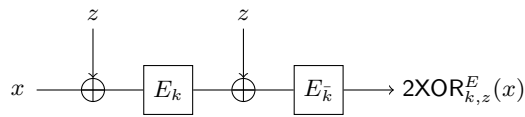
\begin{figure}[htbp]
\centering
\begin{tikzpicture}%[node distance=1cm]
  \draw
  node at (0,0)[l,name=m] {$x$}
  node [xor,circle, name=xor1, right of=m] {}
  node [block,name=e2, right of=xor1] {$E_k$}
  node [xor, circle, name=xor2, right of=e2] {}
  node [block,name=e3, right of=xor2] {$E_{\bar{k}}$}
%  node [xor,circle, name=xor3, right of=xor2] {}
%  node [sponge,name=f2, right of=xor3] {$\pi$}
%  node [xor,circle, name=xor4, right of=f2] {}
  node [name=k1, above of=xor1] {$z$}
  node [name=k2, above of=xor2] {$z$}
%  node [name=k3, above of=xor3] {$K_1$}
%  node [name=k4, above of=xor4] {$K_2$}
%  node [block, name=trunk, right of=xor4] {$\text{Trunk}_t$}
%  node [name=out, right of=trunk, right] {Tag};
  node [name=out, right of=e3, right] {$\mathsf{2XOR}_{k,z}^E(x)$};
  \draw[->] (m) -- (e2) -- (e3) -- (out);
  \draw[->] (k1) -- (xor1);
  \draw[->] (k2) -- (xor2);
 \end{tikzpicture}
\caption{The {\sf 2XOR} construction of~\cite{DBLP:conf/eurocrypt/GaziT12}. $E$ is an ideal $n$-bit block cipher, $z$ is an $n$-bit key, $k$ is a $\kappa$-bit key and $\bar{k}$ is $\pi(k)$ for some chosen permutation $\pi$ without fixpoints.}
\label{fig:2xor}
\end{figure}

%It aims at increasing the key length of a block cipher by composing:
%\[ \mathsf{2XOR}_{k,z}^E(m) = E_{\bar{k}} (E_k (m \oplus z) \oplus z)  \]
%where $\bar{k}$ is $\pi(k)$ for some known fixpoint-free permutation $\pi$, $k$ is a $\kappa$-bit key and $z$ is an $n$-bit key. 

From an $n$-bit block cipher with key length $\kappa$, the {\sf 2XOR} builds a block cipher with key length $n + \kappa$. It can be seen as a strengthening of the {\sf FX} construction (which would have a single block cipher call) that enhances the security when the adversary can make many queries. Indeed, in the ideal cipher model, any classical key-recovery of $\mathsf{2XOR}_{k,z}^E$ requires at least $\bigO{2^{\kappa + n/2}}$ evaluations of $E$, even in a regime where the adversary has access to the full codebook of $\mathsf{2XOR}_{k,z}^E$.

In the quantum setting, one can prove (see \appendixref{section:quantum-lb}) that a quantum adversary needs at least $\bigO{2^{\kappa/2}}$ quantum queries to either $E$, $\mathsf{2XOR}_{k,z}^E$ or their inverses. In~\autoref{section:quantum-attacks}, we show the following:
\begin{quotation}
Given $2^u$ classical chosen-plaintext queries to $\mathsf{2XOR}_{k,z}^E$, a quantum attacker can retrieve the key $k, z$ in quantum time $\bigO{n 2^u +  n^3 2^{(\kappa + n-u) / 2} }$.
\end{quotation}

%In particular, the optimal value of $u$ is $\min\left( \frac{\kappa+n}{3}, n\right)$, leading to a complexity exponent $\max \left( \frac{\kappa + n}{3} , \frac{\kappa}{2} \right)$. For a typical $n  \leq \kappa$, this always leads to a value smaller than $\frac{\kappa + n/2}{2}$. In particular, when $\kappa = 2n$, a 
In particular, when $\kappa = 2n$, a classical adversary knowing the full codebook needs a time $\bigO{}(2^{\frac{5n}{2}})$ to recover the key, whereas a quantum adversary requires only $\bigOt{2^{n}}$. In that case, $\mathsf{2XOR}_{k,z}^E$ offers actually \emph{no improvement} over the standalone cipher $E$, in the quantum setting.

\begin{figure}[htbp]
\centering
\begin{tikzpicture}%[node distance=1cm]
  \draw
  node at (0,0)[l,name=m] {$x$}
  node [block,name=e1, right of=m] {$E_k^1$}
  node [xor,circle, name=xor1, right of=e1] {}
  node [block,name=e2, right of=xor1] {$E_k^2$}
  node [xor, circle, name=xor2, right of=e2] {}
  node [block,name=e3, right of=xor2] {$E_k^3$}
  node [name=k1, above of=xor1] {$k_1$}
  node [name=k2, above of=xor2] {$k_2$}
  node [name=out, right of=e3, right] {$E_{k, k_1,k_2}(x)$};
  \draw[->] (m) -- (e1)-- (e2) -- (e3) -- (out);
  \draw[->] (k1) -- (xor1);
  \draw[->] (k2) -- (xor2);
 \end{tikzpicture}
\caption{\emph{Doubly-extended FX} ({\sf DEFX}) construction. $E^1, E^2, E^3$ are possibly independent block ciphers, but using the same $\kappa$-bit key $k$. $k_1$ and $k_2$ are independent $n$-bit whitening keys.}
\label{fig:constr}
\end{figure}
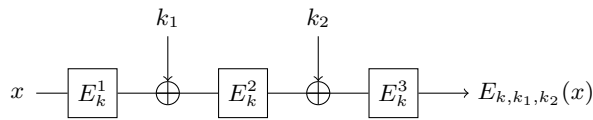

Beyond $\mathsf{2XOR}$, we use {\sf offline-Simon} to attack the extended construction of~\autoref{fig:constr} with the same complexity. 
%To the best of our knowledge, $\mathsf{2XOR}$ is not used in practice, but w
We identify other settings where a quantum adversary can gain this 2.5 advantage, e.g., a key-recovery on {\sf ECBC-MAC} where part of the plaintext is unknown. We also extend our study to the case of \emph{known plaintext queries}, where all but a fraction of the codebook is known, and show that {\sf offline-Simon} still works in this setting.

%Concretely, we consider the construction of~\autoref{fig:constr}. It is an $n$-bit block cipher built out of three independent ideal block ciphers and three keys $k_1, k_2$ and $k$ of respectively $n, n$ and $2n$ bits. We show that with the offline Simon's algorithm, one can recover all the keys in quantum time $\bigOt{2^n}$ after making $\bigO{2^n}$ queries (say, half the codebook). At the same time, we prove on a simplified version of this construction \AS{plz insert ref here} that any classical attack making $2^{n-1}$ queries requires at least a time $\bigO{2^{5n/2}}$. A matching attack consists in guessing the key $k$; the resulting is an Even-Mansour cipher that can be attacked in classical time $\bigO{2^{n/2}}$.

This $2.5$ speedup was not observed before in~\cite{DBLP:conf/asiacrypt/BonnetainHNSS19} because the authors considered constructions such as {\sf FX}, which would omit the calls to $E_k^1$ and $E_k^3$. In that case, there exists improved classical time-data trade-offs that allow to reach precisely the square of the quantum time complexities, and \textsf{offline-Simon} only improves the memory consumption.

Whether this 2.5 speedup is the best achievable is an interesting question. We conjecture that variants of \textsf{offline-Simon} could reach a cubic speedup on appropriate problems, but we have not identified any corresponding cryptographic scenario.

\paragraph{Organization of the Paper.}
We start in \autoref{section:previous} by defining most of the block cipher constructions that will be considered in this paper, and their classical security results. We include results of quantum cryptanalysis for comparison. Details of the attacks are defferred to \autoref{section:prelim}, where we also cover some definitions and necessary background of quantum cryptanalysis, notably quantum search, Simon's algorithm and {\sf offline-Simon}.

We regroup our results and applications in \autoref{section:applis}. We introduce a construction similar to {\sf 2XOR} ({\sf EFX}) and propose self-contained proofs of classical and quantum security. Next, we detail our quantum attack in a chosen-plaintext setting. We also show that when almost all the codebook is known, known-plaintext queries can replace chosen-plaintext queries in {\sf offline-Simon}. This allows us to devise an attack against {\sf EFX} and a strengthened variant which we call {\sf DEFX}.

We discuss the limits of these results in \autoref{section:gap}. We conjecture that a variant of {\sf offline-Simon} could reach a cubic gap, though no corresponding cryptographic problem has been identified for now. We also discuss the apparent similarity of this 2.5 speedup with a 2.5 gap in query complexity~\cite{DBLP:conf/stoc/AmbainisBBLSS16}.

\paragraph{Notations.}
Throughout this paper, we will use the following notation:
\begin{itemize}
\item $E$ will be an $n$-bit state, $\kappa$-bit key block cipher: a family of $2^{\kappa}$ efficiently computable (and invertible) permutations of $\zo^n$. Security proofs consider the \emph{ideal model}, where $E$ is selected uniformly at random. Attacks (distinguishers, key-recoveries) are randomized algorithms whose success probability is studied on average over the random choice of $E$. We will also use $E^i$ to denote independent block ciphers.
\item $\Pi$ is a permutation of $\zo^n$, also selected uniformly at random.
\item $\omega$ is the matrix multiplication exponent. In practical examples, we can replace $\omega$ by 3 since the matrices considered are quite small (at most $256 \times 256$ for standard values of $n$).
\end{itemize}

% !TEX root = ./morethanq.tex

\section{Classical Constructions and Previous Results}\label{section:previous}

In this section, we recapitulate the constructions considered in this paper. For each of them, we recall classical security bounds, quantum security bounds when they exist, and corresponding quantum attacks. These results are summarized in \autoref{table:attacks}. The quantum attacks will be detailed in \autoref{section:prelim}.

\subsection{Context}

We will use, for its simplicity, the Q1 / Q2 terminology of~\cite{DBLP:journals/tosc/KaplanLLN16,DBLP:conf/scn/HosoyamadaS18,DBLP:conf/asiacrypt/BonnetainHNSS19}, which is the most common in quantum cryptanalysis works. Alternative names exist, such as ``quantum chosen-plaintext attack'' (qCPA) instead of Q2, found in most provable security works (e.g.,~\cite{DBLP:conf/pqcrypto/AnandTTU16}) and~\cite{DBLP:conf/ctrsa/ItoHMSI19,DBLP:conf/indocrypt/CidHLS20}.

\begin{itemize}
\item A ``Q2'' attacker has access to a black-box quantum oracle holding some secret. We let $O_f$ denote a quantum oracle for $f$ (we will use the ``standard'' oracle representation, defined in \autoref{section:prelim}).
\item A ``Q1'' attacker can only query a black-box \emph{classically}. Naturally, Q2 attackers are stronger than Q1, since one can always emulate a classical oracle with a quantum one (it suffices to prepare the queries in computational basis states). The Q1 setting also encompasses any situation where there is no secret, for example preimage search in hash functions.
\end{itemize}

The constructions studied in this paper are block ciphers, studied in the \emph{ideal} (cipher or permutation) model. In particular, if $F = F_k[E]$ is the construction and $E$ is its internal component, we assume that $E$ is drawn uniformly at random, and let an attacker query $F$ and $E$ separately. The security proofs show lower bounds on the number of queries to $F$ and $E$ that an attacker must make to succeed. Such bounds can be proven for classical and quantum attackers alike. A Q2 attacker will have access to both $F$ and $E$ in superposition. Though a Q1 attacker will have only classical access to $F$, \emph{he still has quantum access to $E$}. Indeed, although supposedly chosen at random, $E$ remains a public component, with a public implementation. Thus, in the ideal model, Q1 attackers still make black-box quantum queries to $E$.

\paragraph{Attack Scenarios.}
Usually, an idealized cipher construction is proven to be a \emph{strong pseudorandom permutation} (\textsf{sPRP}, see \autoref{def:sprpadvantage} in \appendixref{sec:hcoefproof}). In this security notion, an adversary is asked to distinguish the construction $F_k[E]$ for a random $k$, from a random permutation, by making either forward or backward queries.

Obviously, a key-recovery attack is also a valid \textsf{sPRP} distinguisher. For all the constructions recalled in \autoref{table:attacks}, the security is proven with the \textsf{sPRP} game, and the attacks are key-recovery attacks.

\begin{table}[tb]
\caption{Summary of classical and quantum attacks considered in this paper. $D$ is the amount of classical queries to the construction. CPA = classical chosen-plaintext with classical computations. Q1 = classical chosen-plaintext with quantum computations (non adaptive). Q2 = quantum queries. KPA = classical known-plaintext. In quantum attacks, classical bits and qubits of memory are counted together for simplicity. We stress that all the quantum attacks considered here have only polynomial memory requirements. Complexities are displayed up to a constant. We do not consider attacks with preprocessing, or multi-user attacks. We assume $\kappa \geq n$.}\label{table:attacks}
\centering
\begin{tabular}{c|cccccc}
\toprule
Target  & Setting & Queries & Time & Mem. & Ref. \\
\midrule
{\sf EM}& Adaptive CPA& $2^{n/2}$ & $2^{n/2}$ & negl. &  \cite{DBLP:conf/eurocrypt/DunkelmanKS12}  \\
        & KPA & $D \leq 2^{n/2}$ & $2^n / D$ & $D$ &  \cite{DBLP:conf/eurocrypt/DunkelmanKS12} \\
        & Q2  & $n$ & $n^\omega$ & $n^2$ & \cite{DBLP:conf/isita/KuwakadoM12} \\
        & Q1  & $D \leq 2^{n/3}$ & $\sqrt{2^{n} / D}$ & $n^2$  & \cite{DBLP:conf/asiacrypt/BonnetainHNSS19} \\
\midrule
{\sf FX}& KPA & $D \leq 2^n$ & $2^{\kappa + n}/D$ & $D$ & \cite{DBLP:conf/eurocrypt/DunkelmanKS12} \\
& Adaptive CPA & $D \leq 2^{n/2}$ & $2^{\kappa + n}/D$ & negl. & \cite{DBLP:conf/eurocrypt/Dinur15} \\
& Adaptive CPA & $D \geq 2^{n/2}$ & $2^{\kappa + n}/D$ & $D^2 2^{-n}$ & \cite{DBLP:conf/eurocrypt/Dinur15} \\
%& Q2 & $n 2^{\kappa/2}$ & $n^3 2^{\kappa/2}$ & $n^2$ & \cite{DBLP:conf/asiacrypt/Leander017} \\
& Q2 & $n$ & $n^\omega 2^{\kappa/2}$ & $n^2$ & \cite{DBLP:conf/asiacrypt/BonnetainHNSS19} \\
& Q1 & $D \leq 2^n$ & $\max(D, \sqrt{2^{\kappa + n}/D})$ & $n^2$ & \cite{DBLP:conf/asiacrypt/BonnetainHNSS19} \\
\midrule
{\sf 2XOR}& KPA & $D \leq 2^{n/2}$ & $2^{\kappa + n}/D$ & $D$ & \cite{DBLP:conf/eurocrypt/GaziT12} (adapted) \\
& Q2 & $n$ & $n^\omega 2^{\kappa/2}$ & $n^2$ & \autoref{section:applis} \\
& Q1 & $D \leq 2^n$ & $\max(D, \sqrt{2^{\kappa + n}/D})$ & $n^2$ & \autoref{section:applis} \\
\bottomrule
\end{tabular}
\end{table}

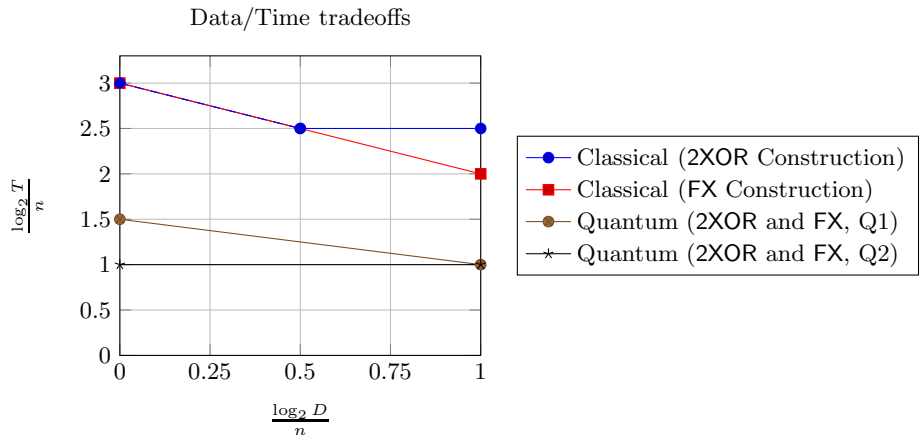
\begin{figure}[htbp]
\centering
% \begin{tikzpicture}
% \begin{axis}[
% scale=0.7,
% xlabel={$\log_2 D / n$},
% ylabel={$\log_2 T / n$},
% xtick={0,0.25,...,1},
% ytick={0,0.5,...,3},
% ymin=0,xmin=0,xmax=1,
% xmajorgrids,ymajorgrids,
% title={{\sf FX} Construction}
% ]
% \addplot coordinates {
% (0, 3) (1, 2)
% };
% %\addlegendentry{Classical}
% \addplot coordinates {
% (0, 1.5) (1, 1)
% };
% %\addlegendentry{Quantum (Q1)}
% \addplot coordinates {
% (0, 1) (1, 1)
% };
% \end{axis}
% \end{tikzpicture}
% &
\begin{tikzpicture}
\begin{axis}[
scale=0.7,
legend pos=outer north east,
xlabel={$\frac{\log_2 D}n$},
ylabel={$\frac{\log_2 T}n$},
xtick={0,0.25,...,1},
ytick={0,0.5,...,3},
ymin=0,xmin=0,xmax=1,
xmajorgrids,ymajorgrids,
title={Data/Time tradeoffs},
legend style={cells={anchor=west},name=legend,at={(1.1,0.5)},anchor=west}
]
\addplot coordinates {
(0, 3) (0.5, 2.5) (1, 2.5)
};
\addlegendentry{Classical ({\sf 2XOR} Construction)}
\addplot coordinates {
(0, 3) (1, 2)
};
\addlegendentry{Classical ({\sf FX} Construction)}
\pgfplotsset{cycle list shift=-2}
\addplot+[forget plot, dashed] coordinates {
(0, 3) (0.5, 2.5)
};
\pgfplotsset{cycle list shift=0}
\addplot coordinates {
(0, 1.5) (1, 1)
};
\addlegendentry{Quantum ({\sf 2XOR} and {\sf FX}, Q1)}
\addplot coordinates {
(0, 1) (1, 1)
};
\addlegendentry{Quantum ({\sf 2XOR} and {\sf FX}, Q2)}
\end{axis}
\end{tikzpicture}
\caption{Detail of \autoref{table:attacks}: comparison of the {\sf FX} and {\sf 2XOR} security in function of the number of queries for $\kappa = 2n$.}
\label{fig:}
\end{figure}

\subsection{The Even-Mansour Cipher}

The Even-Mansour cipher~\cite{DBLP:journals/joc/EvenM97} is a minimalistic construction which is ubiquitous in idealized designs. It starts from a public $n$-bit permutation $\Pi~: \zo^n \to \zo^n$ and two $n$-bit keys $k_1, k_2$ ($k_1 = k_2$ would be enough). The cipher is defined as: $\mathsf{EM}_{k_1, k_2}(x) = \Pi(x \oplus k_1) \oplus k_2$. If $\Pi$ is a random permutation, then an adversary making $T$ queries to $\Pi$ and $D$ queries to $\mathsf{EM}$ cannot recover the key with success probability more than $\bigO{ T D / 2^n}$. Matching attacks are known~\cite{DBLP:conf/asiacrypt/Daemen91,DBLP:conf/eurocrypt/DunkelmanKS12}. The quantum security was first studied by Kuwakado and Morii~\cite{DBLP:conf/isita/KuwakadoM12}, who gave a $\bigO{n^\omega}$ Q2 attack using $\bigO{n}$ queries (the attack will be presented later on). Several Q1 attacks were given in~\cite{DBLP:conf/isita/KuwakadoM12,DBLP:conf/ctrsa/HosoyamadaS18,DBLP:conf/asiacrypt/BonnetainHNSS19}. Only the latter (the most efficient) is displayed in~\autoref{table:attacks}.

\subsection{Key-length Extension Techniques}

Different ways of extending the key lengths of block ciphers have been proposed in the literature. Two well-known examples are the \emph{FX construction} and the \emph{Cascade} construction (or multiple-encryption).

\paragraph{FX-Construction.}
In~\cite{DBLP:conf/crypto/KilianR96}, Kilian and Rogaway proposed key whitenings as a solution to increase the effective key length of a block cipher $E$:
\[  \mathsf{FX}_{k_1, k_2, k}(x) =  E_k(x \oplus k_1) \oplus k_2 \enspace. \]
They showed that in the ideal model, an adversary making $D$ queries to $\mathsf{FX}$ needs to make $T = 2^{n + \kappa}/D$ to $E$ to recover the key. This is matched by the attacks of~\cite{DBLP:conf/eurocrypt/DunkelmanKS12,DBLP:conf/eurocrypt/Dinur15}.

The FX construction can also be seen as an Even-Mansour cipher where the public permutation $\Pi$ is replaced by an $n$-bit block cipher of unknown $\kappa$-bit key. This is why the attack strategies are similar. 

\paragraph{Quantum Security of FX.}
In~\cite{DBLP:journals/corr/abs-2105-01242}, it was shown that given $D$ \emph{non-adaptive} classical chosen-plaintext queries, a quantum adversary needs at least $\sqrt{2^{n + \kappa}/D}$ queries to $E$ to recover the key of $\mathsf{FX}$. This bound is matched by an attack of~\cite{DBLP:conf/asiacrypt/BonnetainHNSS19}, which is also non-adaptive. It seems likely that the same bound holds for adaptive queries, although this has not been formally proven.

\paragraph{Randomized Cascades.}
The \emph{double-XOR Cascade construction} (\textsf{2XOR}) was proposed in~\cite{DBLP:conf/eurocrypt/GaziT12}:
\[ \mathsf{2XOR}_{k,z}^E(m) = E_{\bar{k}} (E_k (m \oplus z) \oplus z)  \]
where $\bar{k}$ is $\pi(k)$ for some known fixpoint-free permutation $\pi$, $k$ is a $\kappa$-bit key and $z$ is an $n$-bit key. 

They show that if $E$ is an ideal cipher (drawn uniformly at random) and $k, z$ are chosen uniformly at random, then the \textsf{sPRP} advantage of an adversary making $q$ queries to $E$ is bounded by: $4 \left( \frac{q}{2^{\kappa + n/2}} \right)^{2/3}$ (Theorem~3 in~\cite{DBLP:conf/eurocrypt/GaziT12}). In particular, the adversary is free to query the whole codebook of $\mathsf{2XOR}_{k,z}^E$.

\paragraph{3XOR and 3XSK.}
Adding a third whitening key in the output of 2XOR yields the 3XOR construction of~\cite{DBLP:conf/fse/GaziLSST15}, which has an improved security. The authors also propose a construction without rekeying, where the two block ciphers are the same:
 \[  \mathsf{3XSK}_{k,z}[E](x) = E_k(E_k(x \oplus z) \oplus \pi(z) ) \oplus z \]
where $\pi$ is a permutation such that $z \mapsto z \oplus \pi(z)$ is also a permutation. As far as we know, the addition of the third whitening key actually renders the \textsf{offline-Simon} attack inoperable.

%They provide a security bound for $q_c \leq 2^{2n/3}$. The 2XOR bound is matched for $q_c < 2^{n/3}$ and otherwise the bound is $2^{\kappa + 2n/3}$. 

%\paragraph{.}
%In~\cite{DBLP:conf/crypto/HoangT16}, the authors prove that for an adversary making at most $q$ queries to the construction, and $p$ queries to the ideal cipher, the advantage in distinguishing a $t$-length cascade $XC[E,t]$ is less than $4^t q p^t / 2^{t(\kappa + n)}$. Here the $t+1$ subkeys of $n$ bits and the $t$ subkeys of $\kappa$ bits must all be sampled uniformly at random, with the constraint that the subkeys of the cipher calls are distinct.
%
%From the point of view of single-user security, the 2XOR construction (which does not have the last key XOR) seems to achieve the same. However, the authors of~\cite{DBLP:conf/crypto/HoangT16} show that the multi-user attacks are more powerful.

% it suffers from a more powerful multi-user attack. With $2^k$ users, the attack can run in time $\bigO{n (2^k + 2^n)}$ 

% !TEX root = ./morethanq.tex

\section{Quantum Preliminaries}\label{section:prelim}

In this section, we recall some background of quantum cryptanalysis, going from Simon's algorithm to the \textsf{offline-Simon} algorithm from~\cite{DBLP:conf/asiacrypt/BonnetainHNSS19}. We assume that the reader is familiar with the basics of quantum computing~\cite{nielsen2002quantum} such as: the definitions of qubits, gates (Hadamard, Toffoli), quantum states and the ket notation $\ket{\psi}$. Note that we write quantum states without their global amplitude factors, e.g., $\frac{1}{\sqrt{2^n}} \sum_{x \in \zo^n} \ket{x}$ will be written $\sum_x \ket{x}$.

%\subsection{Quantum Preliminaries}
%
%We assume that the reader is familiar with the basics of quantum computing~\cite{nielsen2002quantum}. The quantum algorithms considered in this paper are written in the \emph{quantum circuit model}, which is analogous to classical boolean circuits.
%
%We consider a quantum system made of $m$ qubits, or two-level individual subsystems whose state is a normalized vector in a 2-dimensional Hilbert space $\mathcal{H}$. An arbitrary basis $\ket{0}, \ket{1}$ of $\mathcal{H}$ is given. The state of the system is an element of $\mathcal{H}^{2^m}$, on the basis $\ket{i}$ for $i \in \zo^m$.
%
%The state of the system is initialized in $\ket{0}^{\otimes m}$. One then applies a sequence of basic quantum gates, analogous to classical logic gates, which are unitary operators of $\mathcal{H}^{2^m}$. These basic gates are supposed to cost 1. After this, the state is measured. If it is $\sum_{x \in \zo^m} \alpha_x \ket{x}$ for $\alpha_x \in \mathbb{C}$, then a measurement samples $x$ at random with probability $|\alpha_x|^2$, and destroys the state. In this paper, we will typically consider quantum circuits made out of (one-qubit) Hadamard and (three-qubit) Toffoli gates, mapping: $H \ket{b} = \ket{0} + (-1)^b \ket{1}$ and $\mathsf{Tof} \ket{a b c} = \ket{a b (c \oplus a \wedge b)}$ respectively.

We will consider algorithms making oracle calls. A \emph{quantum} (or \emph{superposition}) oracle for a function $f$ will be represented as a black box unitary operator $O_f$: $O_f \ket{x} \ket{y} = \ket{x} \ket{y \oplus f(x)}$.

Any classical reversible algorithm $\mathcal{A}$ can be written as a circuit using only Toffoli gates. Then, there exists a quantum circuit $\mathcal{A}'$ that uses the same amount of gates, but instead of computing $\mathcal{A}(x)$ on an input $x$, it computes $\mathcal{A}$ in superposition: $\mathcal{A}' \ket{x} = \mathcal{A}(x)$. We call $\mathcal{A}'$ a \emph{quantum embedding} of $\mathcal{A}$. Classical algorithms are rarely written with reversibility in mind, but they can always be made reversible up to some trade-off between memory and time complexity overhead~\cite{DBLP:journals/siamcomp/Bennett89,DBLP:journals/siamcomp/LevinS90,DBLP:journals/corr/abs-math-9508218}.

\subsection{Quantum Search}

It is well known that Grover's algorithm~\cite{DBLP:conf/stoc/Grover96} provides a quadratic speedup on any classical algorithm that can be reframed as a black-box search problem. Amplitude Amplification~\cite{brassard2002quantum} further allows to speed up the search for a ``good'' output in any probabilistic algorithm, including another quantum algorithm.

Let $\mathcal{A}$ be a classical probabilistic algorithm with no input, and whose output has a probability $p$ to be ``good''; let $f$ a boolean function that effectively tests if the output is good. We are searching for a good output. 
%In the case of Grover's algorithm, we look in a simple search space $X = \zo^n$ for an element such that $f(x) = 1$; the amplified algorithm corresponds to sampling an element of $X$ at random (which is trivial). 

Classical exhaustive search consists in running $\mathcal{A}$ until the output is good, and we will do that $\bigO{\frac{1}{p}}$ times. Quantum search is a \emph{stateful} procedure using $\bigO{\frac{1}{\sqrt{p}}}$ iterations of a quantum circuit that contains: a quantum implementation of $\mathcal{A}$, and a quantum implementation of $f$. In the case of Grover's algorithm, the search space is trivial, e.g., $\zo^n$. Here $\mathcal{A}$ has only to sample an $n$-bit string at random; the corresponding quantum algorithm is a Hadamard transform $H^{\otimes n} \ket{0} = \sum_{x \in \zo^n} \ket{x}$.

\begin{theorem}[From~\cite{brassard2002quantum}]\label{thm:aa}
Assume that there exists a quantum circuit for $\mathcal{A}$ using $T_A$ operations, and a quantum circuit for $f$ using $T_f$ operations. Then there exists a circuit ${\sf QSearch}(\mathcal{A}, f)$ that, with no input, produces a good output of $\mathcal{A}$. It runs in time: $\floor{ \frac{\pi}{4}  \frac{1}{\arcsin \sqrt{p}}} (2T_A + T_f)$ and succeeds with probability $\max\left(p, 1 - p\right)$.
\end{theorem}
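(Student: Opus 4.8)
The plan is to run the standard amplitude-amplification argument, which reduces the whole evolution to a rotation in a two-dimensional real plane. First I would set $\ket{\psi} = \mathcal{A}\ket{0}$ and split it according to the predicate $f$: writing $\ket{\psi_G}$ and $\ket{\psi_B}$ for the normalized projections of $\ket{\psi}$ onto the good and bad subspaces, we get $\ket{\psi} = \sin\theta\,\ket{\psi_G} + \cos\theta\,\ket{\psi_B}$ with $\sin^2\theta = p$, i.e. $\theta = \arcsin\sqrt{p}$. The central object is the iterate $Q = -\mathcal{A} S_0 \mathcal{A}^{-1} S_f$, where $S_f = I - 2\ket{\psi_G}\bra{\psi_G}$ applies a phase $-1$ to good states (one call to the circuit for $f$, via phase kickback) and $S_0 = I - 2\ket{0}\bra{0}$ applies a phase $-1$ to $\ket{0}$ only.

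Second, I would verify that the plane $\mathrm{span}(\ket{\psi_G}, \ket{\psi_B})$ is invariant under $Q$ and that $Q$ restricted to it is a rotation by angle $2\theta$. This is the computational heart of the proof. One checks that $S_f$ acts as the reflection about the line $\ket{\psi_B}$ inside the plane, while $-\mathcal{A} S_0 \mathcal{A}^{-1} = 2\ket{\psi}\bra{\psi} - I$ acts as the reflection about $\ket{\psi}$; since the composition of two reflections is a rotation by twice the angle between the two mirror lines, and that angle is exactly $\theta$, the map $Q$ rotates by $2\theta$. Concretely I would confirm that, in the ordered basis $(\ket{\psi_G}, \ket{\psi_B})$ and up to orientation, $Q$ coincides with $\begin{pmatrix} \cos 2\theta & \sin 2\theta \\ -\sin 2\theta & \cos 2\theta \end{pmatrix}$, a direct if slightly tedious verification.

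Third, by induction the state after $m$ iterations is $Q^m\ket{\psi} = \sin\!\big((2m+1)\theta\big)\ket{\psi_G} + \cos\!\big((2m+1)\theta\big)\ket{\psi_B}$, so a final measurement (tested by $f$) returns a good output with probability $\sin^2\!\big((2m+1)\theta\big)$. I would then take $m = \lfloor \tfrac{\pi}{4\theta}\rfloor$ and bound this quantity: the inequalities $\tfrac{\pi}{4\theta}-1 < m \le \tfrac{\pi}{4\theta}$ force $(2m+1)\theta \in (\tfrac{\pi}{2}-\theta,\ \tfrac{\pi}{2}+\theta]$, so writing $\phi = (2m+1)\theta - \tfrac{\pi}{2}$ with $|\phi| \le \theta \le \tfrac{\pi}{4}$ gives $|\cos((2m+1)\theta)| = |\sin\phi| \le \sin\theta$, whence $\sin^2\!\big((2m+1)\theta\big) \ge 1 - \sin^2\theta = 1 - p$. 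Note this branch only applies when $p \le \tfrac12$, where indeed $\theta \le \tfrac{\pi}{4}$ and $m \ge 1$.

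Finally I would settle the edge case and the cost. When $p > \tfrac12$ we have $\theta > \tfrac{\pi}{4}$, so $m = 0$ and the procedure simply measures $\mathcal{A}\ket{0}$, succeeding with probability exactly $p$; combining the two regimes yields success probability at least $\max(p, 1-p)$. For the running time, each application of $Q$ invokes $\mathcal{A}$ and $\mathcal{A}^{-1}$ once (cost $2T_A$, since the inverse circuit has the same size) and $f$ once (cost $T_f$), while $S_0$ costs only $\bigO{n}$ and is absorbed; multiplying by the $\lfloor \tfrac{\pi}{4\arcsin\sqrt{p}}\rfloor$ iterations gives the stated bound. The main obstacle is the second step — making precise that $Q$ preserves the plane and pinning down its rotation angle, since everything downstream is trigonometry; a secondary subtlety is the case split that produces $\max(p,1-p)$ rather than a clean $1-p$, which is exactly the $m=0$ regime.
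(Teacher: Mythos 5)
Your proof is correct and is precisely the standard amplitude-amplification argument of Brassard--H{\o}yer--Mosca--Tapp that the paper cites for this theorem (the paper itself gives no proof beyond the citation): the two-reflection rotation by $2\theta$ in $\mathrm{span}(\ket{\psi_G},\ket{\psi_B})$, the iteration count $m=\floor{\pi/(4\theta)}$ with the bound $|\cos((2m+1)\theta)|\leq\sin\theta$, and the $m=0$ regime at $p>\tfrac12$ yielding $\max(p,1-p)$. One minor imprecision worth fixing: as a global operator $S_f$ is $I-2P_G$ with $P_G$ the projector onto the \emph{whole} good subspace, not $I-2\ket{\psi_G}\bra{\psi_G}$; the two agree on the invariant plane, so every subsequent step of your argument stands as written.
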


%Thus, there are no hidden factors of concern in the $\bigO{}$. In practice, $p$ is small and the approximation $\arcsin \sqrt{p} = \sqrt{p}$ works well, so we use directly $p$ to estimate the number of iterates. Besides, it is shown in~\cite{brassard2002quantum} that if $p$ is known exactly, then the probability of failure can be brought down to 0 (a feature of quantum search that has no classical equivalent).

%It is well-known that Grover's algorithm can be used to search for the key $k$ of a block cipher. Given a few known plaintext-ciphertext pairs $(p_i, c_i)$, one can test a key guess $k'$ by checking if the $p_i$ encrypt to the $c_i$. If $k$ has $m$ bits, it can be found with $\bigO{2^{m/2}}$ evaluations of a quantum circuit that encrypts the $p_i$ under a given key. No other queries are necessary, since the specification of the cipher is known.

\subsection{Simon's Algorithm}

In~\cite{DBLP:journals/siamcomp/Simon97a}, Simon gave the first example of an exponential quantum time speedup relative to an oracle.

\begin{problem}[Boolean period-finding]\label{pb:simon}
Given access to an oracle $f~: \zo^n \to \zo^m$ and the promise that:
\begin{itemize}
 \item ({Periodic case}) $\exists s \neq 0, \forall x, \forall y\neq x, [f(x) = f(y)\Leftrightarrow y = x \oplus s]$; or:
 \item ({Injective case}) $f$ is injective (i.e., $s = 0$).
\end{itemize}
Find $s$.
\end{problem}

Simon showed that when $f$ is a black-box classical oracle, this problem requires $\Omega(2^{n/2})$ queries, after which a classical adversary will find a collision of $f$, i.e., a pair $x,y$ such that $f(x) = f(y)$. He can then set $s = x \oplus y$ and verify his guess with a few more queries. 

However, given access to a quantum oracle $O_f$, a very simple algorithm solves this problem in $\bigO{n}$ quantum queries and $\bigO{n^\omega}$ classical postprocessing, where $\omega$ is the matrix multiplication exponent. This algorithm consists in repeating $\bigO{n}$ times a subroutine (\autoref{algo:simon}) which: $\bullet$~samples a random $n$-bit value $y$ in the injective case; $\bullet$~samples a random $n$-bit value $y$ such that $y \cdot s = 0$ in the periodic case. After $\bigO{n}$ samples, we can solve a linear system to find the case and recover $s$.

\begin{algorithm}[hbtp]
\begin{algorithmic}[1]
\State Start in the state \Comment{$\ket{0_n}\ket{0_m}$}
\State Apply a Hadamard transform \Comment{$\sum_x \ket{x} \ket{0_m}$}
\State Query $f$ \Comment{$\sum_x \ket{x} \ket{f(x)} $}
\State Measure the output register \label{step:output-measure}
\State Apply another Hadamard transform
\State Measure the input register, return the value $y$ obtained
\end{algorithmic}
\caption{Simon's subroutine.}\label{algo:simon}
\end{algorithm}

In the injective case, Step~\ref{step:output-measure} gives us a value $f(x_0)$ and makes the state collapse on $\ket{x_0}$ for some unknown $x_0$. The next Hadamard transform turns this into: $\sum_y (-1)^{x_0 \cdot y} \ket{y}$, and so, all $y$ are measured with the same probability

In the periodic case, the state collapses to a superposition of the two preimages $x_0$ and $x_0 \oplus s$: $\frac{1}{\sqrt{2}} \left( \ket{x_0} + \ket{x_0 \oplus s} \right)$. The next Hadamard transform turns this into:
\[ \sum_y \left( (-1)^{x_0 \cdot y} + (-1)^{(x_0 \oplus s) \cdot y} \right) \ket{y} \enspace, \]
and thus, the amplitudes of some of the $y$ turn to zero. These $y$ \emph{cannot} be measured. They are such that: $(-1)^{x_0 \cdot y} + (-1)^{(x_0 \oplus s) \cdot y} = 0 \implies s \cdot y = 1$, which means that we only measure random orthogonal vectors (besides, they all have the same amplitude).

\paragraph{Simon's Algorithm in Cryptanalysis.}
A typical example is the polynomial-time key-recovery on Even-Mansour of Kuwakado and Morii~\cite{DBLP:conf/isita/KuwakadoM12}. Given access to an Even-Mansour cipher $\mathsf{EM}_{k_1, k_2}$ of unknown key, define $f(x) = \mathsf{EM}_{k_1, k_2}(x) \oplus \Pi(x)$. It is periodic of period $k_1$. $\Pi$ is public, thus quantum-accessible. Given quantum oracle access to $\mathsf{EM}$, we can recover $k_1$.

Here, as most of the time in crypanalysis, the function $f$ cannot be promised to be \emph{exactly} injective or periodic, and additional collisions will occur.
Still, in our case, the output size of the periodic function is too large for these collisions to have any influence on the query cost~\cite{DBLP:journals/iacr/Bonnetain20}.
% It was shown in~\cite{DBLP:conf/crypto/KaplanLLN16} that it is sufficient to guarantee:
% \begin{equation}\label{eq:approx-promise}
%  \max_{t \neq s, 0} \mathrm{Prob}_x \left( f(x \oplus t) = f(x) \right) \leq \frac{1}{2} \enspace,
% \end{equation}
% to succeed with a linear number of queries. Such interfering approximate periods $t$ do not occur for the random objects at stake. A refined analysis can be found in~\cite{DBLP:journals/iacr/Bonnetain20}. Here, when $\Pi$ is chosen u.a.r., \autoref{eq:approx-promise} is satisfied with overwhelming probability.

The same principle is used in most of the known quantum polynomial-time attacks in symmetric cryptography~\cite{DBLP:conf/isit/KuwakadoM10,DBLP:conf/isita/KuwakadoM12,DBLP:conf/crypto/KaplanLLN16,DBLP:conf/sacrypt/Bonnetain17,DBLP:conf/sacrypt/BonnetainNS19,DBLP:conf/asiacrypt/Leander017}. A cryptanalysis problem, such as the recovery of the key or of an internal value, is encoded as a period-recovery problem.

\subsection{Grover-meet-Simon}

%\paragraph{Grover-meet-Simon against FX.}
In~\cite{DBLP:conf/asiacrypt/Leander017}, Leander and May proposed to combine Simon's algorithm with quantum search to attack the FX construction:
\[ {\sf FX}_{k, k_1, k_2}(x) = E_k(x \oplus k_1) \oplus k_2 \enspace.\]
Indeed, if we guess correctly the internal key $k$, then we can break the resulting Even-Mansour cipher. In fact, one can actually \emph{recognize the good $k$} by running an Even-Mansour attack: it will be successfull only with the correct $k$.

More generally, the \emph{Grover-meet-Simon} algorithm solves the following problem.

\begin{problem}\label{pb:gms}
Given access to a function $F(x,y)~: \zo^n \times \zo^\kappa \to \zo^n$ such that there exists a unique $y_0$ such that $F(\cdot, y_0)$ is periodic, find $y_0$ and the corresponding period.
\end{problem}

The algorithm is a quantum search over the value $y \in \zo^\kappa$. In order to guess a key $y$, it runs Simon's algorithm internally on the function $F(\cdot, y)$. It ends after $\bigO{n 2^{\kappa/2}}$ quantum queries to $F$ and $\bigO{n^\omega 2^{\kappa/2}}$ quantum time.

Having no interfering periods for \emph{all the functions} of the family $F(\cdot, y)$ allows to obtain an overwhelming probability of success for each test, and ensures the correctness of the algorithm. Again, this condition is satisfied for objects of cryptographic interest, and a tighter analysis if given in~\cite{DBLP:journals/iacr/Bonnetain20}. In the case of {\sf FX}, we define $F(x,y) = \mathsf{FX}_{k_1, k_2, k}(x) \oplus E_y(x)$.

\paragraph{Reversible Simon's Algorithm.}
Let us focus on the test used inside the {\sf FX} attack: it is a quantum circuit that, on input $\ket{y} \ket{0}$, returns $\ket{y} \ket{b}$ where $b = 1$ iff $x \mapsto F(y, x)= \mathsf{FX}_{k_1, k_2, k}(x) \oplus E_y(x)$ is periodic.

This quantum circuit first makes $c = \bigO{n}$ oracle queries to $F(y,x)$, building the state:
\begin{equation}\label{eq:gms-db}
\bigotimes_{1 \leq i \leq c} \sum_x \ket{x} \ket{ F(y, x)  } = \bigotimes_{1 \leq i \leq c} \sum_x \ket{x} \ket{ \mathsf{FX}_{k_1, k_2, k}(x) \oplus E_y(x)  } \enspace.
\end{equation}
These $c$ queries are all uniform superpositions over $x$, and require to query $\mathsf{FX}$. From this state, Simon's algorithm is run reversibly, without measurements. After a Hadamard transform, the input registers contain a family of $\bigO{n}$ vectors, whose dimension is computed. If the dimension is smaller than $n$, then the function is likely to be periodic.

We say ``likely'' because there is some probability to fail. These failures do not disrupt the algorithm, as shown in~\cite{DBLP:conf/asiacrypt/Leander017,DBLP:conf/asiacrypt/BonnetainHNSS19,DBLP:journals/iacr/Bonnetain20}.

These computations can be reverted and the state of~\autoref{eq:gms-db} is obtained again. It can now be reverted to $\ket{0}$ by doing the same oracle queries to $F(y,x)$.

%Going back to Simon's algorithm, one can notice that it only queries its oracle on a uniform input superposition $\sum_x \ket{x}$. Actually, we only need to feed states of the form $\sum_x \ket{x} \ket{f(x)}$ to the algorithm, regardless of how they are produced.

\subsection{Offline-Simon}\label{section:offline-simon-q1}

The \textsf{offline-Simon} algorithm of~\cite{DBLP:conf/asiacrypt/BonnetainHNSS19} can be seen as an optimization of Grover-meet-Simon, where all queries to $ \mathsf{FX}_{k_1, k_2, k}$ are removed from the algorithm, except for the very first ones.

Crucially, the $\mathsf{FX}$ queries remain independent of the internal key guess $y$, and they are always made on the same uniform superposition $\sum_x \ket{x}$. Thus, we can consider that the following state:
\[ \ket{\psi} = \bigotimes_{1 \leq i \leq c} \sum_x \ket{x} \ket{  \mathsf{FX}_{k_1, k_2, k}(x) } \enspace,  \]
is given to the test circuit and returned afterwards. Intuitively, the state $\ket{\psi}$ stores all the data on $\mathsf{FX}$ that is required to run the attack, in a very compact way, since it fits in $\bigO{n^2}$ qubits.

With the queries done once beforehand and reused through the algorithm, the analysis is slightly different, but $\bigO{n}$ queries are still sufficient to succeed~\cite{DBLP:conf/asiacrypt/BonnetainHNSS19,DBLP:journals/iacr/Bonnetain20}.

\paragraph{Requirements.}
Not all Grover-meet-Simon instances can be made ``offline''. For this, we need the function $F(x,y)$ to have a special form, such as $F(x,y) = f(x) \oplus g(x,y)$ where $f$ ($\mathsf{FX}$ in our case) is be the \emph{offline} function, and $g$ ($E$ in our case) the \emph{online} one. In that case, to find the single $y_0$ for which $F(\cdot, y_0)$ is periodic, it suffices to make $\bigO{n}$ queries to $f$ at the beginning of the algorithm.

%\begin{remark}
%These algorithms require a more precise analysis of Simon's algorithm, since it becomes a ``testing'' subroutine in a quantum search. First of all, it must run reversibly, so we do not measure as in the standard presentation (this makes no difference). Second, it has some failure probability which must be bounded. (A function might be incorrectly recognized as periodic if the $y_i$ forming the linear system do not form a full rank family.) We refer to~\cite{DBLP:journals/iacr/Bonnetain20} for more details.
%\end{remark}

\paragraph{Offline-Simon and Q1 Attacks.}
As \textsf{Offline-Simon} uses only a polynomial number of queries, such queries can become very costly without significantly increasing the time cost of the algorithm. In particular, we can now replace the quantum queries by \emph{classical queries} and obtain interesting time-data trade-offs. We will keep the example of FX, taken from~\cite{DBLP:conf/asiacrypt/BonnetainHNSS19}, with a $\kappa$-bit internal key and a block size of $n$ bits. We assume that the adversary can make $D \leq 2^n$ chosen-plaintext queries to $\mathsf{FX}$.
%Then, if the block cipher $E$ is ideal, he needs to make $T$ queries to it, where $D \cdot T = \bigO{2^{n + \kappa}}$, to recover the keys. This bound is matched by the attacks of~\cite{DBLP:conf/eurocrypt/Dinur15}.

With the \textsf{offline-Simon} algorithm, we proceed as follows. We let $D = 2^u$ for some $u$, and $k_1 = k_1^l \| k_1^r$, where $k_1^l$ is a subkey of $u$ bits. We define a function with a ``reduced codebook'':
\[ \begin{cases}
 G~: \zo^u \times \zo^{n-u} \times \zo^n \to \zo^n \\
 x,y_1,y_2 \mapsto \mathsf{FX}_{k_1, k_2, k}(x \| 0_{n-u} ) \oplus E_{y_2}(x \| y_1)
\end{cases}  \]
The key observation is that $G( \cdot, y_1, y_2)$ is periodic if and only if $y_1, y_2 = k_1^r, k$. In other words, part of the key will be handled by the quantum search, and part of it by the Simon subroutine.

We query $\mathsf{FX}_{k_1, k_2, k}(x \| 0_{n-u} )$ for all $x$. We use this data to produce ``manually'' the query states. This requires $\bigOt{2^{u}}$ operations, but \emph{in fine}, no Q2 queries at all. Next, the \textsf{offline-Simon} algorithm searches for the right value of $k_1^r, k$. This requires $\bigO{2^{(n + \kappa -u)/2}}$ iterations and $\bigO{n^\omega 2^{(n + \kappa -u)/2}}$ total time.

We end up with a time-data trade-off $D \cdot T^2 = \bigOt{2^{n + \kappa}}$, valid for $D \leq 2^n$. This means that for a given $D$, we get a time $T = \bigOt{\sqrt{ \frac{2^{n + \kappa}}{D} }}$, the square-root of the classical $T = \bigO{ 2^{n + \kappa} /D }$. However, while the classical attacks need $D$ memory, the quantum attack uses only $\bigO{n^2}$ qubits to store the database. This shows that Simon's algorithm is a crucial tool for this attack.

%By~\autoref{lemma:quantum-search}, we know that this wouldn't have been possible with quantum search alone, unless a memory-efficient algorithm existed as well in the classical setting. 

% !TEX root = ./morethanq.tex

\section{New Result and Applications}\label{section:applis}

%We have seen above that {\sf offline-Simon} bypasses the inherent limits of algorithms based only on quantum search. However so far, it has been applied in settings where the only advantage ``beyond quantum search'' was in memory. In this paper, we present, and we study symmetric cryptanalytic attacks in which the quantum adversary has a \emph{more than quadratic} time speedup.

In this section, we show the 2.5 gap between a classical security proof (in the ideal model) and a quantum attack. Our target is a slightly more general construction than {\sf 2XOR}, that we denote by {\sf EFX}, for ``extended FX''.

\subsection{The EFX Construction and its Security}

Given two independent $n$-bit block ciphers $E^1, E^2$, of key size $\kappa$, and two $n$-bit whitening keys $k_1, k_2$, $\mathsf{EFX}_{k, k_1,k_2}[E^1, E^2]$ (or $\mathsf{EFX}_{k, k_1,k_2}$ for short) is an $n$-bit block cipher with $2n + \kappa$ bits of key (\autoref{fig:constr-simpl}):
\[ \mathsf{EFX}_{k, k_1,k_2}(x) = E_k^2 \left(k_2 \oplus E_k^1 (k_1 \oplus x) \right) \enspace. \]
It is a variant of 2XOR in which $E^1$ and $E^2$ are the same block cipher $E$, but under different keys $k, k' = \pi(k)$.

%\footnote{Our work actually started with the design of this construction, before we found that it was actually realized by the {\sf 2XOR}.}

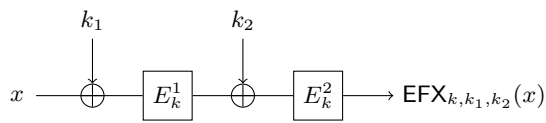
\begin{figure}[htbp]
\centering
\begin{tikzpicture}%[node distance=1cm]
  \draw
  node at (0,0)[l,name=m] {$x$}
  node [xor,circle, name=xor1, right of=m] {}
  node [block,name=e1, right of=xor1] {$E^1_k$}
  node [xor, circle, name=xor2, right of=e1] {}
  node [block,name=e2, right of=xor2] {$E^2_k$}
%  node [xor,circle, name=xor3, right of=xor2] {}
%  node [sponge,name=f2, right of=xor3] {$\pi$}
%  node [xor,circle, name=xor4, right of=f2] {}
  node [name=k1, above of=xor1] {$k_1$}
  node [name=k2, above of=xor2] {$k_2$}
%  node [name=k3, above of=xor3] {$K_1$}
%  node [name=k4, above of=xor4] {$K_2$}
%  node [block, name=trunk, right of=xor4] {$\text{Trunk}_t$}
%  node [name=out, right of=trunk, right] {Tag};
  node [name=out, right of=e2, right] {$\mathsf{EFX}_{k, k_1,k_2}(x)$};
  \draw[->] (m) -- (e1)-- (e2) -- (out);
  \draw[->] (k1) -- (xor1);
  \draw[->] (k2) -- (xor2);
 \end{tikzpicture}
\caption{The ``extended FX'' construction {\sf EFX}.}
\label{fig:constr-simpl}
\end{figure}

\paragraph{Classical Attack on EFX.}
The best attack on {\sf EFX} runs in time $\bigO{2^{\kappa + n/2}}$: one guesses the key $k$, then attacks the Even-Mansour cipher in time $2^{n/2}$.
In fact, this is the same classical attack as for the {\sf FX} construction with a slight change: after guessing the key, one has to perform reverse queries of the additional block cipher on the known ciphertext values.

Just like the attack on {\sf FX}, only $2^{n/2}$ known-plaintext queries are required for this (using the slidex attack on Even-Mansour~\cite{DBLP:conf/eurocrypt/DunkelmanKS12}).
However, having access to the whole codebook of {\sf EFX} does not seem to bring any improvement on the key-recovery since we'll still have to make matching queries to the additional block cipher.

More generally, let $D$ and $T$ be the number of online and offline queries respectively, the best attack runs in $DT = \bigO{2^{\kappa + n}}$  for $D \leq 2^{n/2}$ or else $T = \bigO{2^{\kappa + n/2}}$ for $D \geq 2^{n/2}$. 

\paragraph{Classical Proof of Security.}
The classical attack that we sketched above is essentially the best possible in the ideal cipher model. This can be deduced by the combination of the classical {\sf FX} security bound~\cite{JC:KilRog01} and the one derived by Ga{\v{z}}i and Tessaro~\cite{DBLP:conf/eurocrypt/GaziT12}. In~\autoref{sec:hcoefproof} we also give a new proof of \autoref{th:simplesec} that derives both of these bounds in a single go.

\begin{theorem}
  \label{th:simplesec}
  Consider the {\sf EFX} construction (\autoref{fig:constr-simpl}) and its \secnot{sPRP} game with $n$-bit state size and $\kappa$-bit ideal blockcipher key.
  An adversary $\mathcal{A}$ making $D$ online queries and $T$ offline queries has its advantage is bounded by both:
  \begin{align*} \adv_.^{\secnot{sprp}}(\mathcal{A}) \leq 
    &\frac{3}{2}\cdot\frac{T D}{2^{\kappa + n}}
    +\left(\frac{T^2 D}{2^{2\kappa + 2n}}\right)^{\frac{1}{3}}\\
  & + \left(\frac{2^{4n} T^{2}D}{2^{2(\kappa + 1)}(2^n-D+1)^3(2^n 
  -(T/D \cdot 2^{2n}/2^{\kappa})^{1/3} 
  -D+1)^3}\right)^{\frac{1}{3}}
  \end{align*}
  and:
  $$ \adv_.^{\secnot{sprp}}(\mathcal{A}) \leq  \frac{3}{2}\cdot\frac{T}{2^{\kappa + n/2}}$$
\end{theorem}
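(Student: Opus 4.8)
The plan is to prove both inequalities with a single application of Patarin's H-coefficient technique. I record as the transcript $\tau = (\mathcal{Q}_C, \mathcal{Q}_1, \mathcal{Q}_2, k, k_1, k_2)$ the $D$ construction queries $\mathcal{Q}_C = \{(x_i,y_i)\}$, the offline queries $\mathcal{Q}_1,\mathcal{Q}_2$ to the two ideal ciphers (each a key/input/output triple), and the keys $k,k_1,k_2$, which I reveal to the adversary at the very end of the game. In the real world the keys are genuine and the $y_i$ are produced by $\mathsf{EFX}$; in the ideal world the $y_i$ come from an independent random permutation, and $k,k_1,k_2$ are sampled uniformly and independently of the interaction. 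Revealing the keys only strengthens the adversary, so any bound obtained this way stays valid. By the H-coefficient lemma it then suffices to (i) exhibit a set of bad transcripts of small ideal-world probability, and (ii) show that for every good transcript the ratio $\pr_{\mathrm{real}}[\tau]/\pr_{\mathrm{ideal}}[\tau]$ is at least $1$, up to a negligible additive loss.

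First I would dispatch the good-transcript ratio, which is the routine part. For a good transcript no offline query touches the hidden computation path of any construction query, so in the real world each $(x_i,y_i)$ can be routed through a fresh intermediate value $w_i = E^1_k(x_i\oplus k_1)$ with $y_i = E^2_k(k_2 \oplus w_i)$, and the two ideal ciphers can be completed independently of the construction. Counting the number of cipher/permutation completions consistent with $\tau$ then gives $\pr_{\mathrm{real}}[\tau] \ge \pr_{\mathrm{ideal}}[\tau]$, so clause (ii) contributes nothing and the whole advantage is carried by the bad events.

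The core of the argument is the definition and probability bound of the bad events, and here the two cipher calls force two qualitatively different contributions. The FX-type event is that some offline query lands exactly on the boundary of a construction path under the true internal key: an $E^1$-query with $a_j = k$ and $u_j = x_i \oplus k_1$, or an $E^2$-query with $b_l = k$ and $q_l = y_i$. Since $k$ (respectively $k_1$, and the freshness of the permutation output) is uniform and independent of the transcript, each of the $\le TD$ relevant pairs is hit with probability $\approx 2^{-(\kappa+n)}$, and a union bound yields the term $\tfrac{3}{2}\cdot\tfrac{TD}{2^{\kappa+n}}$. The delicate 2XOR-type event is that the offline queries jointly reconstruct a full path through both $E^1_k$ and $E^2_k$ for some construction query; here a naive union bound over the $\approx DT^2$ triples is far too lossy, and I would follow the Ga\v{z}i--Tessaro strategy: introduce a threshold $\theta$ on the number of pairs $(j,l)\in\mathcal{Q}_1\times\mathcal{Q}_2$ that are internally consistent (same guessed key and compatible intermediate difference), bound the ``too many consistent pairs'' case by Markov's inequality, bound the ``few pairs, one completing a construction query'' case by $\approx \theta D\, 2^{-(\kappa+n)}$, and optimize over $\theta$. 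The optimum $\theta \approx (T/D\cdot 2^{2n}/2^{\kappa})^{1/3}$ is exactly the quantity appearing inside the third summand, and balancing the two cases produces the cube-root term $(T^2 D/2^{2\kappa+2n})^{1/3}$ together with its correction factor.

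Finally, the simpler $D$-independent bound $\tfrac{3}{2}\cdot T/2^{\kappa+n/2}$ I would obtain from a coarser accounting of the same bad events that trades the $D$-dependence for a birthday term: since the construction outputs form a permutation, only a birthday-bounded amount of the codebook is exploitable per guessed internal key, so the advantage saturates once $T$ approaches $2^{\kappa+n/2}$, no matter how large $D$ is. I expect the hardest part to be the threshold analysis of the two-cipher path-completion event — in particular, defining ``internally consistent offline pairs'' so that the Markov step and the completion step balance at the right power, and verifying that the intermediate values remain sufficiently random even though the adversary adaptively queries both $E^1$ and $E^2$ under many keys. Reconciling the threshold-optimized first bound with the clean second bound inside one transcript partition is the secondary obstacle.
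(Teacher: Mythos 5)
Your overall framework---H-coefficients with key-revealing transcripts, an FX-type union bound giving $\tfrac{3}{2}\cdot TD/2^{\kappa+n}$, and a threshold optimized at $\theta\approx(T/D\cdot 2^{2n}/2^{\kappa})^{1/3}$---is close to the paper's, but your central accounting claim fails: good transcripts do \emph{not} have ratio at least $1$, and the large cube-root term of the theorem is precisely the good-transcript ratio deficit, not a bad-event probability. Concretely, once $k,k_1,k_2$ are revealed, for any transcript in which the revealed key $k$ carries $T^1_k$ and $T^2_k$ offline queries to $E^1$ and $E^2$, summing over the (unrevealed) intermediate values gives a real-to-ideal ratio of roughly $\frac{(2^n)_{(D)}(2^n-T^1_k-T^2_k)_{(D)}}{(2^n-T^1_k)_{(D)}(2^n-T^2_k)_{(D)}}$, which is at most $1$ factor-by-factor, since $(2^n-j-T^1_k)(2^n-j-T^2_k)\geq(2^n-j)(2^n-j-T^1_k-T^2_k)$; the deficit is of order $D\,T^1_kT^2_k/2^{2n}$. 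This bias is systematic: it is present---indeed extremal---exactly when \emph{no} offline pair chains toward or completes a construction path, so no enlargement of the bad set can push it into $\epsilon_{\mathrm{bad}}$. The paper therefore (i) caps $T^1_k+T^2_k\leq\alpha T/2^{\kappa}$ via a Markov bad event over the randomness of $k$, at cost $1/\alpha$, which after choosing $1/\alpha=(T^2D/2^{2(\kappa+n)})^{1/3}$ is the theorem's second summand, and (ii) feeds this cap into the ratio bound, producing the third summand with $\alpha T/2^{\kappa}=(T/D\cdot 2^{2n}/2^{\kappa})^{1/3}$---the very quantity you call $\theta$, but entered in the ratio ledger rather than in a completion event. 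Your Markov step also has a randomness-accounting problem: counting consistent pairs over all guessed keys loses the $2^{-2\kappa}$ factor needed for $(T^2D/2^{2\kappa+2n})^{1/3}$, while restricting to pairs under the true key and then charging another $2^{-\kappa}$ in the completion step spends the randomness of $k$ twice.

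A second concrete gap concerns the $D$-independent bound. Your one-sided boundary bad events (an $E^1$-query input equal to $x_i\oplus k_1$ under the true key, or an $E^2$-query output equal to $y_i$) have ideal-world probability growing linearly in $D$ with no birthday cap: over a uniform $k_1$, a single $E^1_k$-query hits the set $\{x_i\oplus k_1\}$ with probability $D/2^n$, so at $D$ near $2^n$ these events are nearly certain and your partition cannot yield $\tfrac{3}{2}\cdot T/2^{\kappa+n/2}$, which must hold with the full codebook. The paper avoids this by also revealing the intermediate values $u=E_{1,k}(x)$, simulated in the ideal world by \autoref{alg:idealtranscript}, so that one-sided hits are benign and only two-query chains touching a construction endpoint are bad; the per-key chain count $\min(T^1_iT^2_i,\,DT^2_i,\,2^n)$ then admits the worst-case repartition $T^1_i=T^2_i=\min(D,2^{n/2})$, whence the factor $\min(D,2^{n/2})$ that makes the second bound $D$-free. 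That bound is completed by observing that at $D=2^n$ the ratio is at least $1$ (so one may take $\alpha\to\infty$) together with query monotonicity---the rigorous form of your ``birthday saturation'' intuition.
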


\begin{corollary}\label{corollary:classical-bound}
  Consider the {\sf EFX} construction (\autoref{fig:constr-simpl}) and its \secnot{sPRP} game.
  To obtain an $\Omega(1)$ advantage, it is required to have both $DT = \Omega(2^{\kappa + n})$ and $T = \Omega(2^{\kappa + n/2})$.
\end{corollary}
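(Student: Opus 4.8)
The plan is to read the two necessary conditions directly off the two advantage bounds of \autoref{th:simplesec}. Since the \secnot{sprp} advantage is bounded simultaneously by \emph{both} right-hand sides, for it to stay $\Omega(1)$ each of the two bounds must itself be $\Omega(1)$: if either were $o(1)$ the advantage would vanish. So I would argue by contraposition, translating ``bound $i$ is $\Omega(1)$'' into the corresponding query condition. Throughout I will use the trivial cap $T = \bigO{2^{\kappa+n}}$ on the number of distinct forward/backward queries to the ideal ciphers $E^1, E^2$.

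The condition $T = \Omega(2^{\kappa+n/2})$ is immediate from the second bound $\adv_.^{\secnot{sprp}}(\mathcal{A}) \le \frac{3}{2}\,T/2^{\kappa+n/2}$: if $T = o(2^{\kappa+n/2})$ this right-hand side is $o(1)$, so the advantage is $o(1)$.

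For $DT = \Omega(2^{\kappa+n})$ I would use the first bound, which is a sum of three non-negative terms; for the advantage to be $\Omega(1)$ at least one of them must be $\Omega(1)$, and I would show each case forces $DT = \Omega(2^{\kappa+n})$. The first term $\frac{3}{2}\,TD/2^{\kappa+n}$ gives the conclusion directly. The second term being $\Omega(1)$ gives $T^2 D = \Omega(2^{2\kappa+2n})$; writing $TD = (T^2 D)/T$ and inserting $T = \bigO{2^{\kappa+n}}$ yields $TD = \Omega(2^{2\kappa+2n}/2^{\kappa+n}) = \Omega(2^{\kappa+n})$.

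The main obstacle is the third term, whose denominator carries the two factors $(2^n-D+1)^3$ and $(2^n-(T2^{2n}/(D2^\kappa))^{1/3}-D+1)^3$. I would treat it by localising to the regime $D \le 2^{n-1}$ with the subtracted cube-root $o(2^n)$: there both factors are $\Theta(2^{3n})$, so after cubing, the third term collapses to a constant multiple of $T^2 D/2^{2\kappa+2n}$, i.e.\ exactly the second-term analysis, again giving $DT = \Omega(2^{\kappa+n})$. The remaining edge cases are dispatched separately and more cheaply: if $D = \Theta(2^n)$ then the already-established $T = \Omega(2^{\kappa+n/2})$ gives $DT = \Omega(2^{\kappa+3n/2})$; and if the subtracted cube-root is $\Omega(2^n)$ then $T = \Omega(D\,2^{\kappa+n})$, whence $DT = \Omega(2^{\kappa+n})$ directly; in the degenerate sub-regime where a denominator factor is non-positive the bound is vacuous and need not be inverted. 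Keeping track of which regime one is in, rather than any single hard inequality, is the only delicate point.
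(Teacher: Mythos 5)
Your proposal is correct and follows exactly the route the paper intends: the corollary is stated as an immediate consequence of \autoref{th:simplesec}, read off by contraposition from its two bounds, which is precisely what you do (your regime analysis of the third term, and the observation that the degenerate denominator cases are absorbed by the $D = \Theta(2^n)$ and cube-root-$\Omega(2^n)$ cases, just makes explicit what the paper leaves implicit). As a minor simplification, the cap $T = \bigO{2^{\kappa+n}}$ is not needed for the second term: since $D \geq 1$, one has $(TD)^2 \geq T^2D = \Omega(2^{2\kappa+2n})$ directly.
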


\paragraph{Quantum Proof of Security.}
In~\autoref{section:quantum-lb}, we study analogously the security in the \emph{quantum ideal cipher model}. We show that any quantum algorithm must make at least $\bigO{2^{\kappa/2}}$ queries to {\sf EFX} and its block ciphers to distinguish {\sf EFX} from a random permutation, with constant probability of success. Our attack matches the bound (up to a polynomial factor).

\subsection{Quantum Attacks}\label{section:quantum-attacks}

We can now explain how to attack {\sf EFX} in the quantum setting.

%We first present a key-recovery attack on {\sf EFX} based on {\sf offline-Simon} which, compared to~\autoref{corollary:classical-bound} above, evidences a more than quadratic time speedup.

\begin{theorem}\label{thm:attack-efx}
There exists a quantum attack that, given $2^u$ classical chosen-plaintext queries to $\mathsf{EFX}$, finds the complete key $k, k_1, k_2$ of the cipher in quantum time $\bigO{n 2^u +  n^\omega 2^{(\kappa + n-u) / 2} }$. It succeeds with overwhelming probability when $E^1, E^2$ are chosen u.a.r.
\end{theorem}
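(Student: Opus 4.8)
The plan is to transport the \textsf{offline-Simon} attack on {\sf FX} (recalled in \autoref{section:offline-simon-q1}) to the two-block-cipher structure of {\sf EFX}. The one genuinely new ingredient is the outer cipher $E_k^2$: since $E^2$ is a \emph{public} ideal cipher, for any guessed key $y_2$ the permutation $(E_{y_2}^2)^{-1}$ is quantum-computable, so I can ``strip it off'' in superposition. For the correct guess $y_2 = k$ this reduces {\sf EFX} to an Even--Mansour-type cipher built from $E_k^1$, which carries a Simon period equal to (part of) the whitening key $k_1$. Everything else is a careful re-run of the {\sf FX} trade-off analysis.

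First I fix the reduced-codebook trade-off exactly as in the {\sf FX} case. Writing $D = 2^u$ and $k_1 = k_1^l \| k_1^r$ with $k_1^l \in \zo^u$, I query $\mathsf{EFX}(x \| 0_{n-u})$ for all $x \in \zo^u$ and load the $\bigO{n}$ superpositions $\sum_x \ket{x}\ket{\mathsf{EFX}(x \| 0_{n-u})}$ once, producing the offline state $\ket{\psi}$. I then define the family
\[
G(x, y_1, y_2) = (E_{y_2}^2)^{-1}\bigl(\mathsf{EFX}(x \| 0_{n-u})\bigr) \oplus E_{y_2}^1(x \| y_1)
\]
for $x \in \zo^u$, $y_1 \in \zo^{n-u}$, $y_2 \in \zo^\kappa$. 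For the correct guess $(y_1, y_2) = (k_1^r, k)$ one has $(E_k^2)^{-1}(\mathsf{EFX}(x\|0)) = k_2 \oplus E_k^1((x \oplus k_1^l) \| k_1^r)$, so that $G(\cdot, k_1^r, k) = k_2 \oplus E_k^1((x \oplus k_1^l)\|k_1^r) \oplus E_k^1(x \| k_1^r)$, which is periodic in $x$ with period $k_1^l$. A quantum search over $(y_1, y_2) \in \zo^{n-u} \times \zo^\kappa$ that runs Simon's test on $G(\cdot, y_1, y_2)$ at each node therefore recovers $(k_1^r, k)$ together with the period $k_1^l$, yielding the full $k$ and $k_1$; the last key $k_2$ is then read off a single query via $k_2 = (E_k^2)^{-1}(\mathsf{EFX}(x\|0)) \oplus E_k^1((x \oplus k_1^l)\|k_1^r)$.

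The complexity falls out immediately. Querying and loading the reduced codebook into $\bigO{n}$ superpositions costs $\bigO{n 2^u}$. The search space has size $2^{(n-u)+\kappa}$, hence $\bigO{2^{(\kappa + n - u)/2}}$ Grover iterations; each iteration applies $(E_{y_2}^2)^{-1}$ and $E_{y_2}^1$ to the $\bigO{n}$ stored registers, runs Simon's $\bigO{n^\omega}$ linear-algebra test on the resulting vectors, and uncomputes, for a total of $\bigO{n^\omega 2^{(\kappa + n - u)/2}}$. Summing gives the claimed $\bigO{n 2^u + n^\omega 2^{(\kappa + n - u)/2}}$.

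The hard part is the correctness of the Simon test inside the search, where two deviations from the {\sf FX} case must be controlled. First, $G$ is not of the clean offline form $f(x) \oplus g(x,y)$: the offline data $\mathsf{EFX}(x\|0)$ is acted on by the \emph{key-dependent} unitary $(E_{y_2}^2)^{-1}$. I would argue this does not break the offline mechanism, since $(E_{y_2}^2)^{-1}$ is reversible and is undone after each test, restoring $\ket{\psi}$ for reuse exactly like the ancilla in the {\sf FX} analysis. Second, I must guarantee that for every wrong guess $(y_1, y_2) \neq (k_1^r, k)$ the function $G(\cdot, y_1, y_2)$ is period-free, so that the test rejects. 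This is precisely where the assumption that $E^1, E^2$ are drawn uniformly at random enters: for $y_2 \neq k$ the composite $(E_{y_2}^2)^{-1} \circ E_k^2$ is an independent random permutation, so the first term of $G$ is a random-like function of $x$ and $G$ is period-free with overwhelming probability; for $y_2 = k$, $y_1 \neq k_1^r$ the mismatch inside $E_k^1$ likewise destroys the period. I would bound these spurious-period probabilities and feed them into the \textsf{offline-Simon} success analysis of~\cite{DBLP:conf/asiacrypt/BonnetainHNSS19,DBLP:journals/iacr/Bonnetain20}, which then yields overwhelming success over the random choice of $E^1, E^2$.
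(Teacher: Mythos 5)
Your proposal matches the paper's proof in all essentials: the same reduced-codebook split $k_1 = k_1^l \| k_1^r$, the same in-place application of $(E_{y_2}^2)^{-1}$ under the guessed key to strip the outer cipher before XORing $E_{y_2}^1(x \| y_1)$ (with both operations uncomputed to restore the offline state), and the same search/test structure yielding the claimed complexity; your explicit derivation that the resulting function has period $k_1^l$, and your handling of spurious periods, coincide with the paper's argument, which settles correctness by noting the test function's output size ($n$ bits) exceeds its input size ($u$ bits) and citing the offline-Simon analysis. This is essentially the paper's proof, just spelled out in more detail.
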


\begin{proof}
The attack is very similar to the {\sf offline-Simon} attack on FX given in~\autoref{section:offline-simon-q1}. We write $k_1 = k_1^l \| k_1^r$ where $k_1^l$ is of $u$ bits and $k_1^r$ is of $n-u$ bits. We query the cipher on inputs of the form $x = * \| 0_{n-u}$, which take all $u$-bit prefixes, and are zero otherwise. We then use a quantum search over the complete key $k$ ($\kappa$ bits) and $k_1^r$.

The only difference with the {\sf FX} attack is in the way we test a guess $y_1, y_2$ of $k_1^r, k$. The database of queries now contains:
\[ \bigotimes_i \sum_{x \in \zo^u} \ket{x} \ket{\mathsf{EFX}(x \| 0_{n-u})} = \bigotimes_i \sum_{x \in \zo^u} \ket{x} \ket{ E_k^2( k_2 \oplus E_k^1( k_1^l \oplus x \| k_1^r)  } \enspace. \]
This means that given our guess $y_1, y_2$, we cannot just XOR the value of $E_{y_2}^1(x \| y_1)$ in place as we did before, because of the call to $E_k^2$.

Fortunately, since we have guessed $y_2$ (that is, the key $k$), we can map \emph{in place}:
\begin{multline*}
 \sum_{x \in \zo^u} \ket{x} \ket{ E_k^2( k_2 \oplus E_k^1( k_1^l \oplus x \| k_1^r)  } \\ \mapsto \sum_{x \in \zo^u} \ket{x} \ket{ (E^2_{y_2})^{-1} \left( E_k^2( k_2 \oplus E_k^1( k_1^l \oplus x \| k_1^r) \right) } \enspace, 
\end{multline*}
which, when $y_2 = k$, is exactly:
\[ \sum_{x \in \zo^u} \ket{x} \ket{ k_2 \oplus E_k^1( k_1^l \oplus x \| k_1^r) } \enspace.  \]
From there, we can XOR $E_{y_2}^1(x \| y_1)$ into the register and see if the function obtained is periodic. Both operations (the XOR and the permutation) are reversed afterwards, and we can move on to the next iteration.

While the periodic function can have additional collisions, its output size ($n$ bits) is actually larger than its input size ($u$ bits). Thus, with overwhelming probability, these collisions have no influence on the algorithm~\cite{DBLP:journals/iacr/Bonnetain20}. \qed
\end{proof}

In particular, when $\kappa = 2n$ and using $2^{n-1}$ classical queries, the attack would run in time $\bigO{n^\omega 2^n}$, compared to the classical $\bigO{2^{5n / 2}}$.

\begin{remark}
For a given $y_2$, $E_{y_2}$ is a permutation of known specification, of which we can compute the inverse. Thus the mapping $\ket{z} \mapsto \ket{E_{y_2}(z)}$ can be done in two steps using an ancillary register:
\[\ket{z} \ket{0} \mapsto \ket{z} \ket{E_{y_2}(z)} \mapsto \ket{z \oplus E_{y_2}^{-1}(E_{y_2}(z))} \ket{E_{y_2}(z)} = \ket{0}  \ket{E_{y_2}(z)}\enspace. \]
For more details on the implementation of such functions, see~\cite{DBLP:journals/iacr/Bonnetain20}.
\end{remark}

\begin{remark}
If the second block cipher call is done at the beginning, and not at the end, the same attack can be done with chosen-ciphertext queries.
\end{remark}

Let us note that within this attack, we are actually using {\sf offline-Simon} to solve the following problem.

\begin{problem}\label{pb:offline-simon-extended}
Given access to a function $f~: \zo^n \to \zo^n$ and a family of permutations $g_y~: \zo^n \to \zo^n$, indexed by $y \in \zo^\kappa$, such that there exists a single $y_0 \in \zo^\kappa$ such that $g_y(f)$ is periodic, find $y_0$.
\end{problem}

In the {\sf FX} attack, $g_y$ was the permutation: $x \mapsto g_y(x) = x \oplus E_y(x)$. Here we simply apply in place another block cipher call, before XORing.

\subsection{Attack with Known-Plaintext Queries}

%\paragraph{Offline-Simon with an Approximate Starting State.}
The presentation of {\sf offline-Simon} in~\cite{DBLP:conf/asiacrypt/BonnetainHNSS19,DBLP:journals/iacr/Bonnetain20,DBLP:journals/iacr/BonnetainJ20}, which we followed in the previous section, constructs an \emph{exact} starting database, that is, a superposition of tuples $(x,f(x))$ with all $x$es forming an affine space. Note that to construct such a vector space, there are some constraints on the queries. There are three scenarios to efficiently achieve this:
\begin{itemize}
 \item The full codebook is queried,
 \item The queries are chosen,
 \item The queries are predictible and regular (for example, queries with a nonce incremented each time).
\end{itemize}

Hence, if we only have access to random known queries, we need to get the full codebook for our attack, which is a drastic limitation. In this section, we show that the algorithm still works if some values \emph{are missing}. That is, instead of:
\begin{multline*}
\ket{\psi}=  \bigotimes_{i = 0}^{c}  \sum_{x \in \zo^n} \ket{x} \ket{f(x) } , \\
 \text{ we start from } \ket{\psi'} = \bigotimes_{i = 0}^{c}  \sum_{x \in X} \ket{x} \ket{f(x)} + \sum_{x \notin X} \ket{x} \ket{0} \enspace, 
\end{multline*}
%\[ \ket{\psi}=  \bigotimes_{i = 0}^{c}  \sum_{x \in \zo^n} \ket{x} \ket{f(x) } , \text{ we start from } \ket{\psi'} = \bigotimes_{i = 0}^{c}  \sum_{x \in X} \ket{x} \ket{f(x)} + \sum_{x \notin X} \ket{x} \ket{0} \enspace,  \]
where $X \subsetneq \zo^n$ is the set of queries that we were allowed to make. In other words, we replace the missing output by the value 0.

Intuitively, if $X$ is close to $\zo^n$, the algorithm should not see that. It is actually easy to show by treating {\sf offline-Simon} as a black-box.

\begin{lemma}\label{lemma:approx-db}
Consider an instance of {\sf offline-Simon} with a starting database of $c = \bigO{n}$ states, that succeeds with probability $p$. Suppose that we now start from a database where a proportion $\alpha$ of queries is missing (that is, $|X| = (1-\alpha)2^n$). Then {\sf offline-Simon} still succeeds with probability at least $p (1 - \sqrt{2 c \alpha})^2$.
\end{lemma}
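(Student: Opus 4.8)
The plan is to treat \textsf{offline-Simon} as a black box, exactly as the remark preceding the statement suggests. Concretely, I regard the whole algorithm as a fixed unitary $U$ (the amplitude-amplification loop together with all reversible Simon subroutines) applied to the starting database register, followed by a final measurement whose accepting outcomes are described by a projector $\Pi$. The success probability on an input state $\ket{\Phi}$ is then $\|\Pi U \ket{\Phi}\|^2$; by hypothesis $\|\Pi U \ket{\psi}\|^2 = p$ for the exact database $\ket{\psi}$, and I want to lower bound $\|\Pi U \ket{\psi'}\|^2 =: p'$ for the perturbed database $\ket{\psi'}$. Since $U$ and $\Pi$ are oblivious to how the database was produced, the only thing that matters is how far $\ket{\psi'}$ is from $\ket{\psi}$ in Euclidean norm.

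First I would compute the overlap register by register. Each of the $c$ registers of $\ket{\psi}$ is the unit vector $\frac{1}{\sqrt{2^n}}\sum_{x}\ket{x}\ket{f(x)}$, and the corresponding register of $\ket{\psi'}$ is $\frac{1}{\sqrt{2^n}}\left(\sum_{x \in X}\ket{x}\ket{f(x)} + \sum_{x \notin X}\ket{x}\ket{0}\right)$, which is also a unit vector. Both share the common ``queried'' part supported on $x \in X$, and differ only on the $\alpha 2^n$ indices $x \notin X$, where the basis states $\ket{x}\ket{f(x)}$ and $\ket{x}\ket{0}$ are orthogonal unless $f(x)=0$. Hence each single-register overlap equals $(1-\alpha)$ plus a nonnegative term, so it is at least $1-\alpha$, and by multiplicativity of the inner product over the tensor product $\langle \psi | \psi' \rangle \ge (1-\alpha)^c \ge 1 - c\alpha$ (Bernoulli), a real nonnegative quantity. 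This immediately yields $\|\ket{\psi} - \ket{\psi'}\|^2 = 2 - 2\langle \psi | \psi'\rangle \le 2c\alpha$, that is $\|\ket{\psi}-\ket{\psi'}\| \le \sqrt{2c\alpha}$.

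The last step converts this $\ell_2$-closeness of inputs into closeness of success probabilities. Writing the success amplitudes as norms and using that $\Pi U$ is a contraction, the reverse triangle inequality gives $|\sqrt{p'} - \sqrt{p}| \le \|\Pi U(\ket{\psi'} - \ket{\psi})\| \le \|\ket{\psi'}-\ket{\psi}\| \le \sqrt{2c\alpha}$, hence $\sqrt{p'} \ge \sqrt{p} - \sqrt{2c\alpha}$. To reach the multiplicative form $p' \ge p(1-\sqrt{2c\alpha})^2$ stated in the lemma, I would refine this by isolating the component $\ket{G}$ common to $\ket{\psi}$ and $\ket{\psi'}$: both states decompose as $\ket{G} + \ket{r}$, with $\ket{G} \perp \ket{r}$ and $\| r \| \le \sqrt{c\alpha}$, so the entire discrepancy between the two runs is carried by the orthogonal remainder. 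Attributing the perturbation of the accepting amplitude to this remainder, rather than to the whole state, is what produces the loss proportional to $\sqrt{p}$ instead of a flat additive loss.

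The main obstacle is precisely this final conversion: a naive distance argument only yields an \emph{additive} loss $\sqrt{p'} \ge \sqrt{p} - \sqrt{2c\alpha}$, whereas the statement asserts a \emph{multiplicative} loss (an error scaling with $\sqrt{p}$). Care is therefore needed to pin the whole perturbation onto the remainder orthogonal to the shared all-queried component $\ket{G}$, which carries the bulk of the success amplitude, and to check that the zero-padded missing entries cannot interfere so as to manufacture spurious accepting events beyond this bound. Everything else (the per-register overlap, the Bernoulli estimate, and the contraction inequality) is routine.
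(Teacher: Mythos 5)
Your first two steps are correct and match the paper: you treat \textsf{offline-Simon} as a black-box unitary followed by a fixed measurement, and you establish $\|\ket{\psi}-\ket{\psi'}\|\leq\sqrt{2c\alpha}$ — your per-register overlap computation ($\braket{\psi|\psi'}\geq(1-\alpha)^c\geq 1-c\alpha$) is a clean equivalent of the paper's direct count of the $\leq c\alpha 2^{nc}$ differing basis vectors. The genuine gap is in the last step: you rigorously obtain only the additive bound $\sqrt{p'}\geq\sqrt{p}-\sqrt{2c\alpha}$, explicitly flag the multiplicative form as "the main obstacle," and the repair you sketch does not deliver it. Decomposing the \emph{inputs} as $\ket{G}+\ket{r}$ and $\ket{G}+\ket{r'}$ with $\|r\|,\|r'\|\leq\sqrt{c\alpha}$ only feeds back into the triangle inequality: $\|\Pi U\ket{\psi'}\|\geq\|\Pi U\ket{G}\|-\sqrt{c\alpha}\geq\sqrt{p}-2\sqrt{c\alpha}$, which is additive again (indeed worse than your direct bound), because nothing constrains how $U$ rotates the remainders relative to the accepting subspace, and $\ket{G}$ — unlike $\ket{\psi}$ — is not the state known to succeed with probability $p$. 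Note also that the lemma's bound $p(1-\sqrt{2c\alpha})^2=(\sqrt{p}-\sqrt{p}\sqrt{2c\alpha})^2$ is strictly \emph{stronger} than your additive one whenever $p<1$, so the latter cannot imply the former.

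What the paper does instead is carry the comparison to the \emph{output}: with $\ket{\phi}=U\ket{\psi}$, $\ket{\phi'}=U\ket{\psi'}$ and error vector $\ket{\phi_e}=\ket{\phi'}-\ket{\phi}$, unitarity plus Cauchy--Schwarz give $\braket{\phi|\phi_e}\leq\|\ket{\phi_e}\|=\|\ket{\psi'}-\ket{\psi}\|\leq\sqrt{2c\alpha}$, and the final measurement of $\ket{\phi'}$ is then read as projecting onto $\ket{\phi}$ with probability at least $(1-\sqrt{2c\alpha})^2$, after which one succeeds with conditional probability $p$ — this conditioning is precisely where the multiplicative factor comes from, and it is the one idea your proposal never reaches. (Incidentally, your worry about missing entries "manufacturing spurious accepting events" is inverted: for a lower bound, extra amplitude on accepting outcomes can only help; the threat is destructive interference of $\ket{\phi_e}$ against the accepting amplitude of $\ket{\phi}$, which is exactly the case the output-side projection argument is meant to absorb.) To finish your proof you would need this fidelity-style step on $\ket{\phi},\ket{\phi'}$, not further decomposition of the input database.
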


\begin{proof}
We can bound the distance between $\ket{\psi}$ and the $\ket{\psi'}$ defined above. Both are sums of $2^{n c}$ basis vectors with uniform amplitudes. There are less than $c \alpha 2^{n c}$ such vectors that appear in $\ket{\psi}$ and that do not appear in $\ket{\psi'}$, and vice-versa, as the value of $f(x)$ is incorrect in each $c$ states in $\ket{\psi'}$ for at most $\alpha2^n$ values. Thus:
\[\| \ket{\psi} - \ket{\psi'} \|^2  \leq  2 c \alpha \implies \| \ket{\psi} - \ket{\psi'} \| \leq \sqrt{2c \alpha}  \enspace. \]
Let $\ket{\phi}$ and $\ket{\phi'}$ be the states obtained after running \textsf{offline-Simon} with respectively $\ket{\psi}$ and $\ket{\psi'}$. We know that if we measure $\ket{\phi}$, we succeed with probability $p$. However, we are actually measuring $\ket{\phi'}$. We let $\ket{\phi_e} = \ket{\phi'} - \ket{\phi}$ the (non-normalized) error vector. We bound:
\[ \braket{ \phi | \phi_e }  \leq \| \ket{\phi} \| \| \ket{\phi'} - \ket{\phi} \| =  \| \ket{\psi'} - \ket{\psi} \|  \leq  \sqrt{2c \alpha}  \enspace,  \]
using the fact that a unitary operator (such as \textsf{offline-Simon}) preserves the euclidean distance. When measuring $\ket{\phi'}$, we project onto $\ket{\phi}$ with probability:
\[  (1 - \braket{ \phi | \phi_e } )^2 \geq (1 - \sqrt{2c \alpha})^2 \enspace, \]
and in that case we succeed with probability $p$.
\end{proof}

\begin{remark} If $\alpha = \bigO{1/n}$, then {\sf offline-Simon} succeeds with constant probability.
\end{remark}

%Interestingly, this shows that {\sf offline-Simon} offers essentially the same robustness against incomplete queries as a single instance of Simon's algorithm. 

Note that~\autoref{lemma:approx-db} only matters when we cannot choose the missing queries, i.e., in a known-plaintext setting. In a chosen-plaintext setting, it would always be more efficient to directly query an affine space.
%\paragraph{Offline-noisy-Simon.}
%In~\autoref{lemma:approx-db} we showed that offline-Simon could start from an approximate (or noisy) database. But this only works if we remove a proportion at most $o(1/n)$ of the codebook. For a standalone instance of Simon's algorithm, it has been shown in~\cite{may2021noisy} that a ``noisy'' version of the algorithm behaves as an LPN problem (solvable in subexponential time). More precisely, if the vectors $y$ measured by Simon's subroutine are erroneous with some constant probability, then recovering the secret period from these vectors corresponds to solving an LPN problem with constant error rate. This suggests that the noise could be increased further in offline-Simon without completely destroying its quantum speedup. However, there are significant technicalities (such as solving LPN instances during the quantum search step).

\paragraph{Attack on EFX.}
Thanks to~\autoref{lemma:approx-db}, we can attack {\sf EFX} with known-plaintext queries provided that we have almost all the codebook, bypassing the need for a vector space in the inputs.
%With known-plaintext queries instead of chosen-plaintext queries, we cannot easily embed a vector space in the inputs. However, we can still perform an attack provided that we have almost all the codebook of {\sf EFX}, using~\autoref{lemma:approx-db}.

\begin{theorem}
There exists a quantum attack that, given $(1-\bigO{1/n})2^n$ classical \emph{known-plaintext} queries to $\mathsf{EFX}$, finds the complete key $k, k_1, k_2$ of the cipher in quantum time $\bigO{n 2^n + n^\omega 2^{\kappa / 2} }$.
\end{theorem}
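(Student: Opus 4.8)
The plan is to obtain this statement almost for free, by specializing the chosen-plaintext attack of \autoref{thm:attack-efx} to the full codebook and then absorbing the missing queries through \autoref{lemma:approx-db}.

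First I would instantiate \autoref{thm:attack-efx} at the extreme value $u = n$. In this regime the prefix $k_1^l$ is the entire whitening key $k_1$, the remainder $k_1^r$ is empty, and the quantum search ranges over the $\kappa$-bit internal key $k$ alone. For the correct guess $y_2 = k$, the offline-Simon subroutine recovers $k_1$ directly as the hidden period of $x \mapsto (E^2_{y_2})^{-1}(\mathsf{EFX}(x)) \oplus E^1_{y_2}(x)$, which on the good guess equals $x \mapsto k_2 \oplus E^1_k(k_1 \oplus x) \oplus E^1_k(x)$ and is periodic of period $k_1$; the value $k_2$ then follows. Substituting $u = n$ into the time bound of \autoref{thm:attack-efx} gives exactly $\bigO{n 2^n + n^\omega 2^{(\kappa + n - n)/2}} = \bigO{n 2^n + n^\omega 2^{\kappa/2}}$, the claimed complexity.

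Second I would observe that when $u = n$ the affine space of inputs required to build the starting database is all of $\zo^n$. Since this carries no nontrivial structure, the chosen-plaintext requirement evaporates: any collection of known plaintext–ciphertext pairs covering the whole of $\zo^n$ produces the very same database $\ket{\psi}$. The only remaining obstacle is that a known-plaintext adversary is not handed the full codebook, and cannot select which pairs are missing.

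This is precisely the situation addressed by \autoref{lemma:approx-db}. With $(1 - \bigO{1/n})2^n$ known pairs, a proportion $\alpha = \bigO{1/n}$ of the codebook is absent, and I would form the approximate database $\ket{\psi'}$ by setting the missing outputs to $0$. Since the number of query states is $c = \bigO{n}$, we have $2c\alpha = \bigO{1}$, and, taking the hidden constant in $\alpha = \bigO{1/n}$ small enough that $2c\alpha$ stays strictly below $1$, the remark following \autoref{lemma:approx-db} guarantees that the attack still succeeds with probability bounded below by a positive constant; a constant (or logarithmic) number of repetitions then boosts this to an overwhelming probability at no asymptotic cost. The only genuinely delicate point is this stability estimate, namely that replacing missing entries by $0$ perturbs the starting state by only $\bigO{\sqrt{c\alpha}} = \bigO{1}$ in Euclidean norm, so that the final measurement statistics — and in particular the identification of the periodic guess $y_0 = k$ — degrade by at most a constant factor. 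As this is exactly what \autoref{lemma:approx-db} establishes for arbitrary missing sets $X$, no fresh analysis of the Simon test is required.
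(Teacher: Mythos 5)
Your proposal is correct and follows essentially the same route as the paper, which obtains this theorem precisely by specializing \autoref{thm:attack-efx} to $u = n$ (so the required input set is all of $\zo^n$ and carries no chosen-plaintext structure) and then invoking \autoref{lemma:approx-db} with $\alpha = \bigO{1/n}$ to absorb the missing known-plaintext queries at only a constant-factor loss in success probability. Your explicit remark that the hidden constant in $\alpha = \bigO{1/n}$ must be taken small enough that $\sqrt{2c\alpha} < 1$ is a precision the paper relegates to the remark after \autoref{lemma:approx-db}, and is a correct reading of that bound.
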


In particular, we can also attack an even more generic version of {\sf EFX}, with three calls to independent block ciphers $E^1, E^2, E^3$. We call it {\sf DEFX}, for \emph{doubly-extended FX} (see~\autoref{fig:constr}):
\[ \mathsf{DEFX}(x) = E^3_k(k_2 \oplus E^2_k(k_1 \oplus E^1_k(x))) \enspace. \]
In this version, it suffices to remark that $\mathsf{DEFX}(x) = \mathsf{EFX}( E^1_k(x))$. We build states of the form $\sum_x \ket{x} \ket{\mathsf{DEFX}(x)}$ containing almost all the codebook. When we have guessed the right key $k$, we can map these states to:
\[ \sum_x \ket{ E^1_k(x) } \ket{ \mathsf{EFX}( E^1_k(x))} = \sum_{x'} \ket{x'} \ket{\mathsf{EFX}(x')} \enspace,  \]
by applying $E^1_k$ in place on the first register, and continue the attack as before.

\subsection{Applications}

The 2XOR-Cascade ({\sf 2XOR} for short) of~\cite{DBLP:conf/eurocrypt/GaziT12} is an instance of {\sf EFX}, and the results of~\autoref{section:quantum-attacks} immediately apply. This construction can also appear in other situations.

%To the best of our knowledge, the generic construction {\sf EFX} has never been used as a standalone block cipher. Although it somewhat enhances the security compared to FX, this is a rather poor design by itself. In this section, we detail some attack scenarios which can be reduced to attacking {\sf EFX}.
%
%\paragraph{Double-XOR Cascade.}
%The \emph{double-XOR Cascade} of~\cite{DBLP:conf/eurocrypt/GaziT12} falls exactly in our setting. The classical SPRP security bound is $\bigO{2^{\kappa + n/2}}$ and our attack runs in $\bigOt{2^{u} + 2^{(\kappa + n -u)/2}}$ quantum time using $\bigO{2^u}$ classical queries, for $u \leq n$.

\paragraph{Encrypt-last-block-CBC-MAC with Unknown Plaintexts.}
{\sf ECBC-MAC} is an ISO standard~\cite[MAC algorithm 2]{isoecbcmac}, variant of {\sf CBC-MAC}, where the output of {\sf CBC-MAC} is reencrypted.

Let us consider a three-block {\sf ECBC-MAC} (\autoref{fig:ecbc}):
\[ m_0, m_1, m_2 \mapsto F(m_0, m_1, m_2) =  E_k'( E_k( m_2 \oplus E_k( m_1 \oplus E_k(m_0))) ) \enspace, \]
with a block cipher $E$ of $n$ bits, $2n$ bits of key $k$, and $k' = \phi(k)$ is derived from $k$. Assume that the adversary observes $F(m_0, m_1, m_2)$ for known values of $m_0$ (for example, a nonce) and \emph{fixed, but unknown} values of $m_1, m_2$.

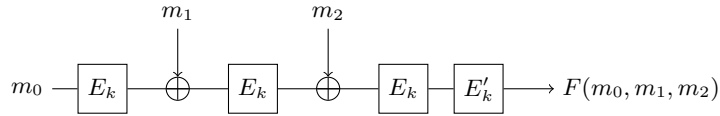
\begin{figure}[htbp]
\centering
\begin{tikzpicture}%[node distance=1cm]
  \draw
  node at (0,0)[l,name=m] {$m_0$}
  node [block, name=e0, right of=m] {$E_k$}
  node [xor,circle, name=xor1, right of=e0] {}
  node [block,name=e1, right of=xor1] {$E_k$}
  node [xor, circle, name=xor2, right of=e1] {}
  node [block,name=e2, right of=xor2] {$E_k$}
  node [block, name=e4, right of=e2] {$E_k'$}
  node [name=k1, above of=xor1] {$m_1$}
  node [name=k2, above of=xor2] {$m_2$}
  node [name=out, right of=e4, right] {$F(m_0, m_1, m_2)$};
  \draw[->] (m) -- (e0) -- (e1)-- (e2) -- (e4) -- (out);
  \draw[->] (k1) -- (xor1);
  \draw[->] (k2) -- (xor2);
 \end{tikzpicture}
\caption{Three-block {\sf ECBC-MAC}.}
\label{fig:ecbc}
\end{figure}

Then the problem of recovering $k, m_1, m_2$ altogether is equivalent to attacking a {\sf DEFX} construction where the cascade encryption with two different keys derived from $k$ is seen as another blockcipher with key $k$ : $E_k'(E_k(x)) = E^2_k(x)$.
More precisely, we assume that the adversary can query for $2^n (1- \alpha)$ values of $m_0$, where $\alpha = \bigO{1/n}$. In that case, \autoref{corollary:classical-bound} implies that any classical attack will require $\bigO{2^{5n/2}}$ computations.
%$T = \bigO{\sqrt{1/\alpha - 1} 2^{5n/2}}$ for a constant advantage. 
Our quantum attack has a time complexity $\bigO{n^\omega 2^{n}}$.

This means that, up to a polynomial factor, it is no harder for the quantum adversary to recover the key of this {\sf ECBC-MAC} instance, although only the first block is known, than it would be in a chosen-plaintext scenario (where a direct quantum search of $k$ becomes possible).

This enhanced key-recovery attack applies as well if the first block is a nonce that the adversary does not choose (as soon as he is allowed $(1-\bigO{1/n})2^n$ queries).

\paragraph{Iterated Even-Mansour Ciphers.}
A natural setting where this construction will occur is with iterated Even-Mansour ciphers with $r$ rounds, such as the one represented in~\autoref{fig:iterated-em}. They have been considered in a variety of contexts. In particular, a classical cryptanalysis of all 4-round such ciphers with two keys $k_0, k_1$, for all sequences of $k_0$ and $k_1$, is given in~\cite{DBLP:conf/asiacrypt/DinurDKS14} (Table 2). For 4 rounds and two keys, \textsf{offline-Simon} does not seem to bring a more than quadratic improvement in any case. However, if the number of rounds increases, we can schedule the keys in order to reproduce an {\sf EFX} construction, for example with:
\[ k_0, k_0, k_1, k_0, k_1, k_0 \enspace. \]
Here the best classical attack seems to be guessing $k_0$, then breaking the Even-Mansour scheme, in time $2^{3n/2}$. By~\autoref{thm:attack-efx}, the quantum attack runs in time $\bigOt{2^{2n/3}}$ which represents a more-than-quadratic speedup.

%we can enter a regime where no classical exponential improvement on the exhaustive key search ($\bigO{2^{2n}}$ in the case) is known, as mentioned in the conclusion of~\cite{DBLP:conf/asiacrypt/DinurDKS14}, though polynomial improvements occur with as much as 8 rounds. For example, if we schedule the two keys as:

\begin{figure}[htbp]
\centering
\begin{tikzpicture}%[node distance=1cm]
  \draw
  node at (0,0)[l,name=m] {$x$}
  node [xor,circle, name=xor1, right of=m] {}
  node [block,name=p1, right of=xor1] {$\Pi_1$}
  node [l,name=k1, above of=xor1] {$k_1$};
  \draw[->] (m) -- (xor1) -- (p1);
  \draw[->] (k1) -- (xor1);
  \draw node [xor,circle, name=xor2, right of=p1] {}
  node [block,name=p2, right of=xor2] {$\Pi_2$}
  node [l,name=k2, above of=xor2] {$k_2$};
  \draw[->] (p1) -- (xor2) -- (p2);
  \draw[->] (k2) -- (xor2);
  \draw node [xor,circle, name=xor3, right of=p2] {}
  node [block,name=p3, right of=xor3] {$\Pi_3$}
  node [l,name=k3, above of=xor3] {$k_3$};
  \draw[->] (p2) -- (xor3) -- (p3);
  \draw[->] (k3) -- (xor3);
  \draw node [xor, circle, name=xor4, right of=p3] {}
  node [name=k4, above of=xor4] {$k_4$}
  node [l,name=c, right of=xor4] {$y$};
  \draw[->] (p3) -- (c);
  \draw[->] (k4) -- (xor4);
 \end{tikzpicture}
\caption{An iterated Even-Mansour cipher with 4 keys. The $\Pi_i$ are independent $n$-bit permutations.}
\label{fig:iterated-em}
\end{figure}
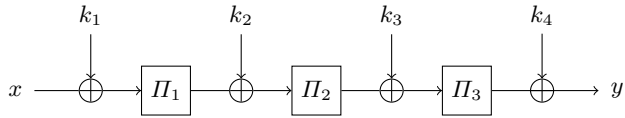

While such constructions have been proposed, they tend to avoid these unfavorable key schedules. The LED-128 block cipher~\cite{DBLP:conf/ches/GuoPPR11}, which can be analyzed as an iterated Even-Mansour scheme~\cite{DBLP:conf/asiacrypt/DinurDKS13}, alternates only between its two subkeys $k_0$ and $k_1$. Also, note that in these applications, the quantum attacks do not go below the classical query complexity lower bound ($\bigO{2^{nr/(r+1)}}$ for $r$-round Even-Mansour ciphers).

%The best attacks on LED~\cite{DBLP:conf/asiacrypt/DinurDKS13} (see this paper, P 13) use this structure. On LED-128, they target 4 steps in time $2^{96}$ and data $2^{32}$, 6 steps in time $2^{124.4}$ and data $2^{59}$ (and memory $2^{59}$) and 8 steps in time $2^{123.8}$ and data $2^{49}$ and memory $2^{60}$.
%
%We remark that LED-128 reduced to 4 steps can be written as the composition of 3 ciphers with key $k_1$ with intermediate XORs by $k_2$. Thus we can attack the cipher in quantum time $2^{64}$ (but this is not competitive with Grover).

% !TEX root = ./morethanq.tex

\section{On the Maximal Gap}\label{section:gap}

As we have recalled above, exponential speedups can be obtained when the quantum adversary can make superposition queries. For classical queries in symmetric cryptography, the best speedup remained quadratic for a long time. It is likely to remain polynomial, but as we manage to reach a 2.5 gap, it is natural to ask by how much we might extend it.

Note that if we formulate the question only as ``largest speedup when only classical queries are given'', it will not properly represent the class of symmetric cryptography attacks that we are interested in. Indeed, Shor's algorithm provides an exponential speedup on a problem with only classical queries.

However, there is still a major difference, in that we are interested in constructions \emph{with security proofs in the ideal model} (e.g., ideal ciphers, random oracles, random permutations). Here the definition of a \emph{largest gap} is more reasonable: all the quantum speedups known are polynomial at best. Besides, as we are interested in query complexities only, there could be some connection between classical and quantum query complexities. In this section, we discuss these possibilities.

%In fact, we may conjecture that in this context, the maximal speedup is cubic at best.

%\begin{conjecture}
%The largest \emph{query complexity gap} for quantum vs. classical attacks, on symmetric constructions, when only classical queries to the construction are given, is cubic.
%\end{conjecture}
%
%We explain this conjecture in this section.

%\paragraph{Relations between Classical and Quantum Random Oracles.}
%Recently in~\cite{DBLP:conf/eurocrypt/YamakawaZ21}, Yamakawa and Zhandry showed the existence of digital signature and public-key encryption schemes secure classically in the Random Oracle Model (ROM), but insecure in the quantum Random Oracle Model (qROM), where the adversary has quantum access to the random oracle. This shows that there does not exist a generic ``lifting theorem'' transforming proofs in the ROM into proofs in the qROM.
%
%We note that the result of~\cite{DBLP:conf/eurocrypt/YamakawaZ21} relies on a computational assumption (QLWE). Thus, it stands outside the context of our work. It is still possible that a generic lifting theorem for security proofs in the ideal cipher model, \emph{with classical construction queries}, exists.

\subsection{Limitations of Offline-Simon}

In this section, we will be interested in query complexity gaps only, and we do not make any considerations on the memory used or the time efficiency of the algorithms. 
%(in particular, we can assume any qRAM model).

In full generality, the \textsf{offline-Simon} algorithm can be seen as an algorithm that:
\begin{itemize}
\item queries a construction $F$ with unknown key, and populates a table with these queries
\item searches for some secret key $k$ using a quantum search where, in order to test a given $k$, queries to the table are made, and a \emph{superposition attack} on some construction is launched.
\end{itemize}

In particular, when attacking {\sf FX} with classical queries only, each iteration of the quantum search reproduces the attack on the Even-Mansour cipher -- and uses Simon's algorithm. But we could take this design more generally, and replace the Even-Mansour attack by any other attack using superposition queries. Thus, there is a link between the maximal gap achievable by the \textsf{offline} strategy and the maximal gap of superposition attacks.

\paragraph{Maximal Gap for Superposition Attacks.}
The gap in the Even-Mansour attack is $\mathsf{poly}(n)$ vs. $\bigO{2^{n/2}}$. We could try to increase it up to $\mathsf{poly}(n)$ vs. $\bigO{2^n}$. This is the best we can hope for, because we consider an n-bit construction: $\bigO{2^n}$ is its maximal query complexity.
%\footnote{
%We could even think of obtaining a larger value, e.g., $\bigO{2^{2n}}$. But in~\cite{DBLP:journals/siamcomp/AaronsonA18}, Aaronson and Ambainis proved that any quantum algorithm that makes a constant number $t$ of queries to a table of size $N$ ($\bigOt{2^n}$ since we consider $n$-bit oracles here) can be simulated by a classical algorithm making $\bigO{ N^{1 - 1/(2t)}}$ non-adaptive queries. Due to this result, we can expect that if a quantum polynomial-time attack on the internal construction exists, then a classical attack in time $\bigO{2^n}$ should exist as well. However, this does not forbid a gap of, say, $2^{n/4}$ vs. $2^{2n}$.
%}.
All exponential speedups in quantum cryptanalysis that we know to date, including Q2 attacks on symmetric primitives, and attacks on asymmetric schemes, are based on variants of Simon's and Shor's algorithms. The classical counterpart of these algorithm is a collision search and, as such, they only reach a speedup $\mathsf{poly}(n)$ vs. $\bigO{2^{n/2}}$ at best. However, the quantum computing literature contains problems such as \emph{k-fold forrelation}~\cite{DBLP:journals/siamcomp/AaronsonA18}, with a gap $\bigO{2^{n - \epsilon}}$ against $\bigO{1}$ for every constant $\epsilon$, conjectured in~\cite{DBLP:journals/siamcomp/AaronsonA18} and proven in~\cite{DBLP:journals/eccc/BansalS20}. Using such a problem in place of Simon's algorithm could give us a cubic speedup (without a proof, but easy to conjecture). However, forrelation is not a problem that seems to naturally arise in cryptography. Finding a cryptographically relevant example of a gap between 2.5 and 3 is an interesting open question.

\paragraph{Composing the Speedups.}
We have seen that by composing quantum search, we cannot go above a global quadratic speedup. The same goes for any polynomial speedup. Though it would be possible to compose \textsf{offline-Simon} with itself (the superposition attack launched at each iteration contains then an instance of Grover-meet-Simon), doing so does not increase the speedup. 

%In order to go above 2.5 (or 3, if we manage to build an internal Q2 attack with a better speedup), the algorithm should be structured differently. Whether this is possible (especially in a cryptographic context) remains also an open question.

%Thus, we can see that the maximal gap achievable with an \emph{offline} algorithm such as \textsf{offline-Simon} is related to the maximal gap achievable by superposition attacks. More precisely:
%
%\begin{quotation}
%If there exists an idealized $n$-bit cipher construction with a quantum attack in time $\bigOt{2^{cn}}$, such that the best classical attack runs in time $\bigOt{2^{(2c + \beta)n}}$, then there may exist an \emph{offline} attack with exponent gap $2 + \beta$.
%\end{quotation}
%
%Here, $\beta$ quantifies the additional advantage of the quantum attack with respect to quantum search. Simon's algorithm provides only a gap between a quantum time $\mathsf{poly}(n)$ and a classical $\bigO{2^{n/2}}$, thus $\beta = 0.5$. But \textsf{offline-Simon} also gives the exact same gap. Thus we cannot hope to increase the gap by composing \textsf{offline-Simon} with itself.

\subsection{Relation with Query Complexity}

The question of finding the \emph{largest possible gap} in our context bears some similarities with the question of comparing randomized and quantum query complexities of total boolean functions. A gap 2.5 already exists in the related literature~\cite{DBLP:conf/stoc/AaronsonBK16}. We will now explain the reasons behind this coincidence.

\paragraph{Definitions.}
First of all, we need to recapitulate some essential definitions and results of query complexity. We will focus only on a very restricted subset of results. Let us consider a boolean function $f~: \zo^N \to \zo$. The definition of $f$ is known, and the only way to evaluate it is then to know some bits of its input string $x_0, \ldots, x_{N-1}$. Here, $N$ can be thought of as an exponential number.

When $f$ is defined over all its input, we call it a \emph{total} function, as opposed to a \emph{partial} function defined only over some domain $D \subseteq \zo^N$. For example, the $\mathsf{or}_N~: \zo^n \to \zo$ function computes the \textsf{OR} of all its bits.

For any $f$, we define:
\begin{itemize}
\item the deterministic query complexity $D(f)$: the minimum number of queries that have to be made by a deterministic algorithm computing $f(x)$ on every input $x$;
\item the bounded-error randomized query complexity $R(f)$: the minimum number of queries made by a randomized algorithm that outputs $f(x)$ with probability at least $2/3$ \emph{on every input $x$};
\item the quantum query complexity $Q(f)$: the minimum number of queries made by a quantum algorithm that outputs $f(x)$ with probability $2/3$.
\end{itemize}

For example, the classical query complexity of $\mathsf{or}_N$ is $N$, and its quantum query complexity is $\Theta(\sqrt{N})$ (thanks to Grover's algorithm and its matching lower bound).

Clearly, we have in general $Q(f) \leq R(f) \leq D(f)$. In classical cryptography we are usually interested in the measure $R(f)$, and in post-quantum cryptography in $Q(f)$. It has been known for a long time that for \emph{total} boolean functions, polynomial relations hold between these measures. In particular, Beals et al.~\cite{DBLP:journals/jacm/BealsBCMW01} showed that for any total function $f$, $D(f) = \bigO{Q(f)^6}$, and so $R(f) = \bigO{Q(f)^6}$. This was improved very recently in~\cite{DBLP:journals/eccc/AaronsonBKT20} to $D(f) = \bigO{Q(f)^4}$ (and so $R(f) = \bigO{Q(f)^4}$). The quartic relation with $D(f)$ is tight (by a separation given in~\cite{DBLP:conf/stoc/AmbainisBBLSS16}), but the relation with $R(f)$ is conjectured in~\cite{DBLP:conf/stoc/AmbainisBBLSS16} to be cubic.

%\begin{theorem}\label{thm:total-functions-relation}
%Let $f$ be a \emph{total} function. Then $D(f) = \bigO{Q(f)^4}$ (and so $R(f) = \bigO{Q(f)^4}$).
%\end{theorem}

%By \emph{total}, we mean that $f$ is defined on all its inputs, by opposition to a \emph{partial} function that would be defined only on a smaller subspace $S \subseteq \zo^N$. Indeed, while (conjectured) exponential quantum speedups are known, they happen only with partial functions, where the input satisfies a \emph{promise}.

%The relation $D(f) = \bigO{Q(f)^4}$ is actually tight, by results of~\cite{DBLP:conf/stoc/AmbainisBBLSS16}. The question is still open for the relation with $R(f)$, and it is conjectured~\cite{DBLP:journals/eccc/AaronsonBKT20} that it might be actually cubic.

\newcommand{\bigOmegat}[1]{\widetilde{\Omega}\left( #1 \right)}
\newcommand{\bigOmega}[1]{\widetilde{\Omega}\left( #1 \right)}

\paragraph{Promise Problems.}
These results underlie the idea that quantum speedups ``need structure'': indeed, an exponential quantum speedup can occur only if $f$ assumes some \emph{promise} on its input (for example for Simon's algorithm, that it encodes a periodic function).
%We want to emphasize that they actually \emph{do} occur in symmetric cryptography.

Let us now take an example: the attack on {\sf EFX} of~\autoref{thm:attack-efx}. 

Recovering the key of an {\sf EFX} instance \emph{could} be seen as computing a boolean function $f$ with a promise. It would be done as follows: the input of the function encodes ${\sf EFX}_{k, k_1, k_2}[E](x)$ for all $x$ and $E_z(x)$ for all $(z,x)$; that is, the complete tables of the {\sf EFX} cipher and the ideal cipher upon which it is built. The function must compute the key $k, k_1, k_2$ used in ${\sf EFX}$. Although the second table ($E$) could be any value, since any block cipher can be selected at random, the function satisfies the promise that the first table actually encodes ${\sf EFX}[E]$.

There is, however, a significant difference between the query complexity of $f$ and the security of ${\sf EFX}$. The proof of security in the ideal cipher model reasons about adversaries as \emph{average-case} algorithms. Similarly, the classical and quantum attacks work \emph{on average} over all ciphers $E$. Typically, when running the Grover-meet-Simon attack, there are bad cases, corresponding to some rare choices of $E$, in which the algorithm will not be able to return the key. But the relations in query complexity concern only  worst-case complexities\footnote{Average-case complexities do not behave well, as shown in~\cite{DBLP:conf/stacs/AmbainisW00}.}. Our attack is not a worst-case algorithm, and so, we cannot say anything about $Q(f)$.

\paragraph{2.5 Separation Result.}
In~\cite{DBLP:conf/stoc/AaronsonBK16}, the authors proved the existence of a total function $f$ for which $R(f) = \bigOmegat{N^{2.5}}$ and $Q(f)= \bigOt{N}$. Since this 2.5 exponent is reminiscent of ours, we briefly review how it was obtained.

The authors start by defining a function with a promise, by composing \textsf{Forrelation} (a promise problem) and \textsf{And-Or} (a boolean function which has a provable quadratic quantum speedup). We do not need to define \textsf{Forrelation} here. Simon's problem could have been used instead, as a speedup $\mathsf{poly}(n)$ vs. $\bigO{2^{n/2}}$ is sufficient.

By combining \textsf{And-Or}s of size $N^2$ with a \textsf{Forrelation} of size $N$, one obtains a quantum algorithm running in time $Q(f) = \bigOt{N}$, because \textsf{Forrelation} requires $\bigOt{1}$ queries and \textsf{And-Or} requires $\bigO{N}$ queries using Grover's algorithm. The corresponding classical algorithm runs in time $\bigOt{N^{2.5}}$, due to the gap in both problems. Next, the authors introduce a generic \emph{cheat sheet} framework which allows to turn partial functions into total ones. The \emph{cheat sheet} variant of a function $f$, $f_{CS}$, is more costly. But this additional cost comes from a \emph{certificate} function, which checks if the input satisfies the promise. In the case studied in~\cite{DBLP:conf/stoc/AaronsonBK16}, the certificate simply consists in checking the outputs of the \textsf{And-Or}s, and checking that the \textsf{Forrelation} instance satisfies its promise: all of this can be done in quantum time $\bigOt{N}$. So the \emph{cheat sheet} variant of the above function provides the said query complexity gap.

The \textsf{offline-Simon} attack does actually the opposite of the function above. Instead of computing a \textsf{Simon} instance out of many individual \textsf{And-Or} results, it computes an \textsf{Or} of many independent \textsf{Simon} instances: we are looking for the single periodic function in a family of functions. This is why the 2.5 exponents coincide.

Besides, since we want to make only classical queries, we have to pay an additional cost $N$ corresponding to the classical queries to {\sf EFX}. This additional cost coincides with the cost of verifying the \textsf{Forrelation} instance. This is why, similarly to the cheat sheet technique, the {\sf offline-Simon} structure will allow a cubic gap at most. 
%This is why we conjecture that ``offline'' algorithms allow at best a cubic gap, similarly to the maximal query complexity gap that the cheat sheet technique seems to reach. 
Yet, these are only similarities, as there is no connection between worst-case and average-case algorithms.

%\begin{remark}
%Initially, we tried to make this connection explicit by giving a worst-case algorithm computing the {\sf 2XOR} attack. For this, we wanted to include some additional promise, that would avoid the bad cases. However, since we are searching for the periodic function among many (and not one, as in~\cite{DBLP:conf/stoc/AaronsonBK16}), we must verify that all of them are far from being periodic. This makes this additional promise too costly to verify, and breaks down the ``cheat sheet'' conversion of our algorithm (thus we can give a promise function, but we cannot obtain a total function out of it).
%\end{remark}

%\input{maxgap-v5.tex}

\section{Conclusion}

In this paper, we gave the first example of a more than quadratic speedup of a symmetric cryptanalytic attack in the classical query model. This 2.5 speedup is actually provable in the ideal cipher model. It is a direct counterexample to the folklore belief that \emph{doubling the key sizes} of symmetric constructions is sufficient to protect against quantum attackers. In particular, generic key-length extension techniques should be carefully analyzed: the {\sf 2XOR} Cascade proposed in~\cite{DBLP:conf/eurocrypt/GaziT12} offers practically no additional security in the quantum setting.

The most obvious open question is by how much this gap may be increased. The algorithm we used, \textsf{offline-Simon}, does not seem capable of reaching more than a 2.5 gap. Although a cubic separation seems achievable, we couldn't manage to obtain one with problems of cryptographic interest.
This is reminiscent of the cubic gap which is conjectured to be the best achievable between the randomized and quantum query complexities of total functions~\cite{DBLP:conf/stoc/AaronsonBK16}. However, there is a stark difference between the problems at stake, and in our case, it is not even known if a polynomial relation holds in general.

\ifeprint
\subsubsection*{Acknowledgements.}
The authors would like to thank Akinori Hosoyamada for helpful comments. A.S. has been supported by the ERC Advanced Grant N\textsuperscript{o} 740972 (ALGSTRONGCRYPTO).
\else
\fi
%===============

%\nocite{*}
%\bibliographystyle{alpha}
\bibliography{morethanq}

\begin{thebibliography}{10}
\providecommand{\url}[1]{\texttt{#1}}
\providecommand{\urlprefix}{URL }

\bibitem{DBLP:journals/siamcomp/AaronsonA18}
Aaronson, S., Ambainis, A.: Forrelation: {A} problem that optimally separates
  quantum from classical computing. {SIAM} J. Comput.  47(3),  982--1038 (2018)

\bibitem{DBLP:conf/stoc/AaronsonBK16}
Aaronson, S., Ben{-}David, S., Kothari, R.: Separations in query complexity
  using cheat sheets. In: {STOC}. pp. 863--876. {ACM} (2016)

\bibitem{DBLP:journals/eccc/AaronsonBKT20}
Aaronson, S., Ben{-}David, S., Kothari, R., Tal, A.: Quantum implications of
  huang's sensitivity theorem. Electron. Colloquium Comput. Complex.  27, ~66
  (2020)

\bibitem{DBLP:conf/stoc/AmbainisBBLSS16}
Ambainis, A., Balodis, K., Belovs, A., Lee, T., Santha, M., Smotrovs, J.:
  Separations in query complexity based on pointer functions. In: {STOC}. pp.
  800--813. {ACM} (2016)

\bibitem{DBLP:conf/stacs/AmbainisW00}
Ambainis, A., de~Wolf, R.: Average-case quantum query complexity. In: {STACS}.
  Lecture Notes in Computer Science, vol. 1770, pp. 133--144. Springer (2000)

\bibitem{DBLP:conf/pqcrypto/AnandTTU16}
Anand, M.V., Targhi, E.E., Tabia, G.N., Unruh, D.: Post-quantum security of the
  {CBC}, {CFB}, {OFB}, {CTR}, and {XTS} modes of operation. In: PQCrypto.
  Lecture Notes in Computer Science, vol. 9606, pp. 44--63. Springer (2016)

\bibitem{DBLP:journals/eccc/BansalS20}
Bansal, N., Sinha, M.: $k$-forrelation optimally separates quantum and
  classical query complexity. Electron. Colloquium Comput. Complex.  27,  127
  (2020)

\bibitem{DBLP:journals/jacm/BealsBCMW01}
Beals, R., Buhrman, H., Cleve, R., Mosca, M., de~Wolf, R.: Quantum lower bounds
  by polynomials. J. {ACM}  48(4),  778--797 (2001)

\bibitem{DBLP:journals/siamcomp/Bennett89}
Bennett, C.H.: Time/space trade-offs for reversible computation. {SIAM} J.
  Comput.  18(4),  766--776 (1989)

\bibitem{DBLP:conf/sacrypt/Bonnetain17}
Bonnetain, X.: Quantum key-recovery on full {AEZ}. In: {SAC}. Lecture Notes in
  Computer Science, vol. 10719, pp. 394--406. Springer (2017)

\bibitem{DBLP:journals/iacr/Bonnetain20}
Bonnetain, X.: Tight bounds for {S}imon's algorithm. In: Longa, P., R\`afols,
  C. (eds.) {LATINCRYPT} 2021. Lecture Notes in Computer Science, Springer
  (2021)

\bibitem{DBLP:conf/asiacrypt/BonnetainHNSS19}
Bonnetain, X., Hosoyamada, A., Naya{-}Plasencia, M., Sasaki, Y.,
  Schrottenloher, A.: Quantum attacks without superposition queries: The
  offline simon's algorithm. In: {ASIACRYPT} {(1)}. Lecture Notes in Computer
  Science, vol. 11921, pp. 552--583. Springer (2019)

\bibitem{DBLP:journals/iacr/BonnetainJ20}
Bonnetain, X., Jaques, S.: Quantum period finding against symmetric primitives
  in practice. {IACR} Trans. Cryptogr. Hardw. Embed. Syst.  2022(1) (2021)

\bibitem{DBLP:conf/sacrypt/BonnetainNS19}
Bonnetain, X., Naya{-}Plasencia, M., Schrottenloher, A.: On quantum slide
  attacks. In: {SAC}. Lecture Notes in Computer Science, vol. 11959, pp.
  492--519. Springer (2019)

\bibitem{DBLP:journals/tosc/BonnetainNS19}
Bonnetain, X., Naya{-}Plasencia, M., Schrottenloher, A.: Quantum security
  analysis of {AES}. {IACR} Trans. Symmetric Cryptol.  2019(2),  55--93 (2019)

\bibitem{brassard2002quantum}
Brassard, G., Hoyer, P., Mosca, M., Tapp, A.: Quantum amplitude amplification
  and estimation. Contemporary Mathematics  305,  53--74 (2002)

\bibitem{DBLP:journals/tosc/ChauhanKS21}
Chauhan, A.K., Kumar, A., Sanadhya, S.K.: Quantum free-start collision attacks
  on double block length hashing with round-reduced {AES-256}. {IACR} Trans.
  Symmetric Cryptol.  2021(1),  316--336 (2021)

\bibitem{DBLP:conf/indocrypt/CidHLS20}
Cid, C., Hosoyamada, A., Liu, Y., Sim, S.M.: Quantum cryptanalysis on
  contracting feistel structures and observation on related-key settings. In:
  {INDOCRYPT}. Lecture Notes in Computer Science, vol. 12578, pp. 373--394.
  Springer (2020)

\bibitem{DBLP:conf/asiacrypt/Daemen91}
Daemen, J.: Limitations of the even-mansour construction. In: {ASIACRYPT}.
  Lecture Notes in Computer Science, vol. 739, pp. 495--498. Springer (1991)

\bibitem{DBLP:conf/eurocrypt/Dinur15}
Dinur, I.: Cryptanalytic time-memory-data tradeoffs for fx-constructions with
  applications to {PRINCE} and {PRIDE}. In: {EUROCRYPT} {(1)}. Lecture Notes in
  Computer Science, vol. 9056, pp. 231--253. Springer (2015)

\bibitem{DBLP:conf/asiacrypt/DinurDKS13}
Dinur, I., Dunkelman, O., Keller, N., Shamir, A.: Key recovery attacks on
  3-round even-mansour, 8-step led-128, and full {AES2}. In: {ASIACRYPT} {(1)}.
  Lecture Notes in Computer Science, vol. 8269, pp. 337--356. Springer (2013)

\bibitem{DBLP:conf/asiacrypt/DinurDKS14}
Dinur, I., Dunkelman, O., Keller, N., Shamir, A.: Cryptanalysis of iterated
  even-mansour schemes with two keys. In: {ASIACRYPT} {(1)}. Lecture Notes in
  Computer Science, vol. 8873, pp. 439--457. Springer (2014)

\bibitem{DBLP:conf/asiacrypt/DongSSGWH20}
Dong, X., Sun, S., Shi, D., Gao, F., Wang, X., Hu, L.: Quantum collision
  attacks on aes-like hashing with low quantum random access memories. In:
  {ASIACRYPT} {(2)}. Lecture Notes in Computer Science, vol. 12492, pp.
  727--757. Springer (2020)

\bibitem{DBLP:conf/eurocrypt/DunkelmanKS12}
Dunkelman, O., Keller, N., Shamir, A.: Minimalism in cryptography: The
  even-mansour scheme revisited. In: {EUROCRYPT}. Lecture Notes in Computer
  Science, vol. 7237, pp. 336--354. Springer (2012)

\bibitem{DBLP:journals/joc/EvenM97}
Even, S., Mansour, Y.: A construction of a cipher from a single pseudorandom
  permutation. J. Cryptol.  10(3),  151--162 (1997)

\bibitem{DBLP:conf/fse/GaziLSST15}
Gazi, P., Lee, J., Seurin, Y., Steinberger, J.P., Tessaro, S.: Relaxing
  full-codebook security: {A} refined analysis of key-length extension schemes.
  In: {FSE}. Lecture Notes in Computer Science, vol. 9054, pp. 319--341.
  Springer (2015)

\bibitem{DBLP:conf/eurocrypt/GaziT12}
Gazi, P., Tessaro, S.: Efficient and optimally secure key-length extension for
  block ciphers via randomized cascading. In: {EUROCRYPT}. Lecture Notes in
  Computer Science, vol. 7237, pp. 63--80. Springer (2012)

\bibitem{DBLP:conf/stoc/Grover96}
Grover, L.K.: A fast quantum mechanical algorithm for database search. In:
  {STOC}. pp. 212--219. {ACM} (1996)

\bibitem{DBLP:conf/ches/GuoPPR11}
Guo, J., Peyrin, T., Poschmann, A., Robshaw, M.J.B.: The {LED} block cipher.
  In: {CHES}. Lecture Notes in Computer Science, vol. 6917, pp. 326--341.
  Springer (2011)

\bibitem{DBLP:conf/ctrsa/HosoyamadaS18}
Hosoyamada, A., Sasaki, Y.: Cryptanalysis against symmetric-key schemes with
  online classical queries and offline quantum computations. In: {CT-RSA}.
  Lecture Notes in Computer Science, vol. 10808, pp. 198--218. Springer (2018)

\bibitem{DBLP:conf/scn/HosoyamadaS18}
Hosoyamada, A., Sasaki, Y.: Quantum demirci-sel{\c{c}}uk meet-in-the-middle
  attacks: Applications to 6-round generic feistel constructions. In: {SCN}.
  Lecture Notes in Computer Science, vol. 11035, pp. 386--403. Springer (2018)

\bibitem{DBLP:conf/eurocrypt/HosoyamadaS20}
Hosoyamada, A., Sasaki, Y.: Finding hash collisions with quantum computers by
  using differential trails with smaller probability than birthday bound. In:
  {EUROCRYPT} {(2)}. Lecture Notes in Computer Science, vol. 12106, pp.
  249--279. Springer (2020)

\bibitem{DBLP:journals/iacr/HosoyamadaS21}
Hosoyamada, A., Sasaki, Y.: Quantum collision attacks on reduced {SHA-256} and
  {SHA-512}. In: {CRYPTO} {(1)}. Lecture Notes in Computer Science, vol. 12825,
  pp. 616--646. Springer (2021)

\bibitem{DBLP:conf/asiacrypt/HosoyamadaY18}
Hosoyamada, A., Yasuda, K.: Building quantum-one-way functions from block
  ciphers: Davies-meyer and merkle-damg{\aa}rd constructions. In: {ASIACRYPT}
  {(1)}. LNCS, vol. 11272, pp. 275--304. Springer (2018)

\bibitem{isoecbcmac}
{ISO Central Secretary}: {Information technology -- Security techniques --
  Message Authentication Codes (MACs) -- Part 1: Mechanisms using a block
  cipher}. Standard ISO/IEC 9797-1:2011, International Organization for
  Standardization, Geneva, CH (Mar 2011),
  \url{https://www.iso.org/standard/50375.html}

\bibitem{DBLP:conf/ctrsa/ItoHMSI19}
Ito, G., Hosoyamada, A., Matsumoto, R., Sasaki, Y., Iwata, T.: Quantum
  chosen-ciphertext attacks against feistel ciphers. In: {CT-RSA}. Lecture
  Notes in Computer Science, vol. 11405, pp. 391--411. Springer (2019)

\bibitem{DBLP:journals/corr/abs-2105-01242}
Jaeger, J., Song, F., Tessaro, S.: Quantum key-length extension. CoRR
  abs/2105.01242 (2021)

\bibitem{DBLP:conf/crypto/KaplanLLN16}
Kaplan, M., Leurent, G., Leverrier, A., Naya{-}Plasencia, M.: Breaking
  symmetric cryptosystems using quantum period finding. In: {CRYPTO} {(2)}.
  Lecture Notes in Computer Science, vol. 9815, pp. 207--237. Springer (2016)

\bibitem{DBLP:journals/tosc/KaplanLLN16}
Kaplan, M., Leurent, G., Leverrier, A., Naya{-}Plasencia, M.: Quantum
  differential and linear cryptanalysis. {IACR} Trans. Symmetric Cryptol.
  2016(1),  71--94 (2016)

\bibitem{DBLP:conf/crypto/KilianR96}
Kilian, J., Rogaway, P.: How to protect {DES} against exhaustive key search.
  In: {CRYPTO}. Lecture Notes in Computer Science, vol. 1109, pp. 252--267.
  Springer (1996)

\bibitem{JC:KilRog01}
Kilian, J., Rogaway, P.: How to protect {DES} against exhaustive key search (an
  analysis of {DESX}). Journal of Cryptology  14(1),  17--35 (Jan 2001)

\bibitem{DBLP:journals/corr/abs-math-9508218}
Knill, E.: An analysis of bennett's pebble game. CoRR  abs/math/9508218 (1995)

\bibitem{DBLP:conf/isit/KuwakadoM10}
Kuwakado, H., Morii, M.: Quantum distinguisher between the 3-round feistel
  cipher and the random permutation. In: {ISIT}. pp. 2682--2685. {IEEE} (2010)

\bibitem{DBLP:conf/isita/KuwakadoM12}
Kuwakado, H., Morii, M.: Security on the quantum-type even-mansour cipher. In:
  {ISITA}. pp. 312--316. {IEEE} (2012)

\bibitem{DBLP:conf/asiacrypt/Leander017}
Leander, G., May, A.: Grover meets simon - quantumly attacking the
  fx-construction. In: {ASIACRYPT} {(2)}. Lecture Notes in Computer Science,
  vol. 10625, pp. 161--178. Springer (2017)

\bibitem{DBLP:journals/siamcomp/LevinS90}
Levin, R.Y., Sherman, A.T.: A note on {B}ennett's time-space tradeoff for
  reversible computation. {SIAM} J. Comput.  19(4),  673--677 (1990)

\bibitem{DBLP:journals/siamcomp/MagniezNRS11}
Magniez, F., Nayak, A., Roland, J., Santha, M.: Search via quantum walk. {SIAM}
  J. Comput.  40(1),  142--164 (2011)

\bibitem{NAP25196}
{National Academies of Sciences{,} Engineering{,} and Medicine}: Quantum
  Computing: Progress~ and ~Prospects. The National Academies Press,
  ~Washington{,} ~DC (2018)

\bibitem{nielsen2002quantum}
Nielsen, M.A., Chuang, I.: Quantum computation and quantum information (2002)

\bibitem{DBLP:conf/focs/Shor94}
Shor, P.W.: Algorithms for quantum computation: Discrete logarithms and
  factoring. In: {FOCS}. pp. 124--134. {IEEE} Computer Society (1994)

\bibitem{DBLP:journals/siamcomp/Simon97a}
Simon, D.R.: On the power of quantum computation. {SIAM} J. Comput.  26(5),
  1474--1483 (1997)

\bibitem{DBLP:conf/crypto/0001Y17}
Song, F., Yun, A.: Quantum security of {NMAC} and related constructions - {PRF}
  domain extension against quantum attacks. In: {CRYPTO} {(2)}. LNCS, vol.
  10402, pp. 283--309. Springer (2017)

\bibitem{Zalka99}
Zalka, C.: Grover’s quantum searching algorithm is optimal. Physical Review A
   60(4),  2746 (1999)

\end{thebibliography}
% lwc
%\bibliography{abbrev3,crypto,biblio}
%\bibliographystyle{alpha}
\bibliographystyle{splncs03}
%\printbibliography

%=====================
\newpage
\appendix

\section*{\appendixname{}}

% !TEX root = ./morethanq.tex

\section{Extended Quantum Search}\label{app:ext-qsearch}

We use the following recursive definition of algorithms that combine quantum searches.

\begin{definition}\label{def:quantum-search}
We define the class of ``extended quantum search algorithms'' (${\sf ExtQSearch}$), recursively, as follows:
\begin{itemize}
\item a Toffoli gate;
\item a quantum circuit formed from a sequence of ${\sf ExtQSearch}$ algorithms;
\item a quantum circuit ${\sf QSearch}(\mathcal{A})$ where $\mathcal{A}$ is in the class ${\sf ExtQSearch}$, and the test function of the quantum search is trivial (e.g., $\mathcal{A}$ outputs a boolean flag indicating if its result is correct).
\end{itemize}
\end{definition}

Similar definitions can be found in the literature, e.g.,~\cite{DBLP:journals/tosc/BonnetainNS19}. We use the fact that Toffoli gates are universal for reversible computing; any quantum circuit made only of Toffoli gates can be translated into a classical algorithm with the same time and memory complexities.

\begin{lemma}[Classical-quantum search correspondence]\label{lemma:quantum-search}
Let $\mathcal{A}$ be a quantum algorithm of the class ${\sf ExtQSearch}$. Assume that it uses a memory (counted in number of qubits) $\mathcal{M}$ and a time (in quantum gates) $\mathcal{T}$. Then there exists a probabilistic classical algorithm $\mathcal{A}'$ that emulates $\mathcal{A}$, i.e., returns the same results as if one measured the output of $\mathcal{A}$. This algorithm uses on average less than $\mathcal{T}^2$ logic gates and uses a memory of $\mathcal{M}$ bits.
\end{lemma}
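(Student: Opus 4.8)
The plan is to prove \autoref{lemma:quantum-search} by structural induction on the recursive definition of the class ${\sf ExtQSearch}$ given in \autoref{def:quantum-search}. The induction will simultaneously establish the time bound $\mathcal{T}'=\bigO{\mathcal{T}^2}$ and the memory bound $\mathcal{M}'=\mathcal{M}$, since controlling the memory throughout is what prevents the quadratic blowups from compounding. The key structural fact, already noted after \autoref{def:quantum-search}, is that Toffoli gates are universal for reversible classical computation, so that any sub-circuit built \emph{purely} from Toffoli gates (the two non-search cases of the recursion) translates verbatim into a classical reversible routine with identical gate count and identical memory. Thus the only genuine work is in the quantum-search case.

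First I would treat the two base/composition cases. For a single Toffoli gate, the classical emulation is immediate with $\mathcal{T}'=\mathcal{T}=1$ and $\mathcal{M}'=\mathcal{M}$. For a sequence $\mathcal{A}=\mathcal{A}_1;\dots;\mathcal{A}_\ell$ of ${\sf ExtQSearch}$ algorithms, I would apply the induction hypothesis to each $\mathcal{A}_i$ to obtain classical $\mathcal{A}_i'$ with time $\bigO{\mathcal{T}_i^2}$ and memory $\mathcal{M}_i\le\mathcal{M}$, then run them in sequence. Here I must be slightly careful: naively summing gives $\sum_i \mathcal{T}_i^2$, which is at most $\left(\sum_i \mathcal{T}_i\right)^2=\mathcal{T}^2$ since the cross terms are nonnegative, so the quadratic bound is preserved, and the memory is the maximum of the $\mathcal{M}_i$, hence still $\le\mathcal{M}$. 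The subtlety is that, unlike the quantum sequence which threads a shared quantum register, a classical emulation can only feed the measured output of $\mathcal{A}_i'$ into $\mathcal{A}_{i+1}'$; I would argue that because each $\mathcal{A}_i$ is itself of the ${\sf ExtQSearch}$ form ending (in the search case) in a measurement that produces a definite classical output, this sequential-measurement emulation indeed reproduces the output distribution of the composed quantum circuit.

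The main case is $\mathcal{A}={\sf QSearch}(\mathcal{B})$ where $\mathcal{B}\in{\sf ExtQSearch}$ has a trivial test function, i.e.\ $\mathcal{B}$ outputs a flag indicating success with some probability $p$. By \autoref{thm:aa}, the quantum cost is $\mathcal{T}=\bigO{\tfrac{1}{\sqrt{p}}\,\mathcal{T}_\mathcal{B}}$, where $\mathcal{T}_\mathcal{B}$ is the cost of one invocation of $\mathcal{B}$ (the test being trivial). The induction hypothesis gives a classical $\mathcal{B}'$ emulating one run of $\mathcal{B}$ in time $\bigO{\mathcal{T}_\mathcal{B}^2}$ and memory $\mathcal{M}_\mathcal{B}\le\mathcal{M}$. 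The classical emulation of $\mathcal{A}$ is simply to repeat $\mathcal{B}'$ until it reports success, which takes $\bigO{\tfrac{1}{p}}$ trials on average. The expected classical time is therefore $\bigO{\tfrac{1}{p}\,\mathcal{T}_\mathcal{B}^2}=\bigO{\left(\tfrac{1}{\sqrt{p}}\,\mathcal{T}_\mathcal{B}\right)^2}=\bigO{\mathcal{T}^2}$, which is exactly the squared quantum time; crucially, this classical repetition reuses the \emph{same} memory each trial, so $\mathcal{M}'=\mathcal{M}_\mathcal{B}\le\mathcal{M}$ and no memory is accumulated across trials.

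I expect the main obstacle to be making the sequence-composition step fully rigorous while keeping the memory bound tight: I must verify that emulating a shared quantum register by threading classical measured values is faithful, which relies on each constituent ${\sf ExtQSearch}$ block effectively collapsing to a classical output before the next block consumes it, and that the bound $\sum_i\mathcal{T}_i^2\le\left(\sum_i\mathcal{T}_i\right)^2$ is applied so that the constant in the $\bigO{\cdot}$ does not blow up through the recursion depth. A secondary point worth addressing is the distinction between the average-case (expected) running time of the classical algorithm and the worst-case quantum time, which is why the statement is phrased with ``on average''; I would note that the $\bigO{1/p}$ trial count and hence the $\bigO{\mathcal{T}^2}$ bound hold in expectation, matching the wording of the lemma.
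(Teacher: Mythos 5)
Your overall strategy coincides with the paper's: structural induction over \autoref{def:quantum-search}, verbatim translation of Toffoli subcircuits, the bound $\sum_i \mathcal{T}_i^2 \leq \left(\sum_i \mathcal{T}_i\right)^2$ for sequences, and repeat-until-success emulation of ${\sf QSearch}(\mathcal{B})$. However, your central step in the search case runs the inequality in the wrong direction. You cite \autoref{thm:aa} as giving $\mathcal{T} = \bigO{\mathcal{T}_\mathcal{B}/\sqrt{p}}$ — an \emph{upper} bound on the quantum time — and then conclude that the expected classical time $\bigO{\mathcal{T}_\mathcal{B}^2/p}$ equals $\bigO{\mathcal{T}^2}$. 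That inference is invalid: from $\mathcal{T} = \bigO{X}$ one cannot deduce $X^2 = \bigO{\mathcal{T}^2}$. What the proof needs is a \emph{lower} bound on the quantum time, $\mathcal{T} \geq \mathcal{T}_\mathcal{B}/\sqrt{p}$, i.e.\ that the search cannot be \emph{faster} than this; only then does the chain $\frac{1}{p}\mathcal{T}_{\mathcal{B}'} \leq \frac{1}{p}\mathcal{T}_\mathcal{B}^2 \leq \mathcal{T}^2$ close. The paper extracts exactly this from the explicit iteration count of \autoref{thm:aa}: since $\arcsin\sqrt{p} \leq \frac{\pi}{2}\sqrt{p}$, the count $\floor{\frac{\pi}{4}\,\frac{1}{\arcsin\sqrt{p}}}$ is at least of order $\frac{1}{\sqrt{p}}$, and each iteration costs $2\mathcal{T}_\mathcal{B}$, so $\mathcal{T} \geq \frac{1}{\sqrt{p}}\mathcal{T}_\mathcal{B}$ and hence $\mathcal{T}^2 \geq \frac{1}{p}\mathcal{T}_\mathcal{B}^2$. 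Your argument is salvageable because the iteration count is indeed $\Theta(1/\sqrt{p})$, but you must state and justify the lower bound from the formula rather than from the big-O statement; note also that the lemma claims the classical cost is \emph{less than} $\mathcal{T}^2$ with no hidden constant, which your $\bigO{\cdot}$ phrasing does not deliver, whereas the paper's constant-free inequalities (including the exact $\sum_i \mathcal{T}_i^2 \leq \left(\sum_i \mathcal{T}_i\right)^2$ in the composition case) do.

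Two smaller remarks. First, you handle the sequence case arguably more carefully than the paper, which dismisses the faithfulness of threading measured classical values between blocks with ``it can be easily seen''; your observation that each block collapses to a definite classical output before the next consumes it is the right justification. Second, the paper opens its proof by assuming that quantum searches are \emph{exact} (after the prescribed iterations the state is a uniform superposition of good outputs), remarking that this only strengthens the result; you should make a similar convention explicit, since \autoref{thm:aa} as stated succeeds only with probability $\max(p, 1-p)$, and without it the claim that $\mathcal{A}'$ reproduces the measured output distribution of $\mathcal{A}$ needs qualification.
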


\begin{proof}
In this proof, we will assume that \emph{quantum searches are exact}, meaning that after the number of iterations required by~\autoref{thm:aa}, the algorithm creates a uniform superposition of ``good'' outputs. This is not always the case, but it can only strengthen our result.

We prove the correspondence by writing down the algorithm $\mathcal{A}'$.

1. Assume that $\mathcal{A}$ is a Toffoli gate or a sequence of Toffoli gates, then this quantum circuit is actually a classical reversible circuit. The result follows trivially.

2. Assume that $\mathcal{A}$ is a sequence of calls to $m$ algorithms of the class ${\sf QSearch}$, denoted $\mathcal{A}_1, \ldots, \mathcal{A}_m$, of quantum time complexities $\mathcal{T}_1, \ldots, \mathcal{T}_m$ and maximal memory complexity $\mathcal{M}$. Assume that the correspondence holds for all the $\mathcal{A}_i$. Then there exist classical algorithms $\mathcal{A}_i'$ returning the same results in time $\mathcal{T}_1^2, \ldots, \mathcal{T}_m^2$ and with the same memory $\mathcal{M}$. The classical circuit for $\mathcal{A}'$ has the same layout as the quantum one, in which we replace all the $\mathcal{A}_i$ by the $\mathcal{A}_i'$. It can be easily seen that $\mathcal{A}'$ outputs the same distribution as $\mathcal{A}$. It has an average time complexity:
\[ \sum \mathcal{T}_i^2 \leq \left( \sum \mathcal{T}_i \right)^2 \enspace, \]
and uses the same memory as $\mathcal{A}$, since the $\mathcal{A}_i'$ have the same memories, and the circuit layout is the same.

3. Finally, assume that $\mathcal{A}$ is a quantum search: ${\sf QSearch}(\mathcal{B})$ where $\mathcal{B}$ is in the class ${\sf ExtQSearch}$ and allows a trivial test. Let $T_B$ be the quantum time complexity of $\mathcal{B}$, $p$ its success probability. By~\autoref{thm:aa}, the time complexity $T_A$ of $\mathcal{A}$ is greater than: $\floor{ \frac{\pi}{4} \frac{1}{\arcsin \sqrt{p}}} (2T_B )$.

For $x \in [0;1]$, we have $\arcsin x \geq x$, which implies $\frac{1}{\arcsin \sqrt{x}} \leq \frac{1}{\sqrt{x}}$. However $\frac{\pi}{2} \sqrt{x} \geq \arcsin \sqrt{x}$. Thus we can deduce that $T_A \geq \frac{1}{\sqrt{p}} T_B \implies T_A^2 \geq \frac{1}{p} T_B^2$. In the classical algorithm, we merely run $\mathcal{B}'$ until a good output is found. The average complexity is thus:
\[ T_{A'} = \frac{1}{p} T_{B'} \leq \frac{1}{p} T_B^2 \leq T_A^2 \enspace, \]
which proves the result.
\end{proof}

%\autoref{lemma:quantum-search} formalizes the fact that algorithms based only on quantum search, and that do not use any additional structure of the problem, cannot reach a more than quadratic speedup. 
In particular for key-recovery attacks (the main focus of our paper), we can deduce the following corollary:
\begin{quotation}
If a cipher admits a quantum key-recovery attack (faster than Grover's algorithm) in the class ${\sf ExtQSearch}$, then it also admits a classical key-recovery attack.
\end{quotation}

Indeed, the classical procedure given by~\autoref{lemma:quantum-search} will have a time complexity below classical exhaustive search of the key. In other words, attacks based on quantum search fully comply with the paradigm of ``doubling the secret sizes''. It should be noted that this holds regardless of the query setting (Q1 or Q2).

This does not mean that these attacks are not interesting, since it is good to know if a quantum adversary can leverage the existing classical weaknesses of a cryptosystem. But it means that quantum search alone \emph{cannot introduce new breaks} regarding key-recovery.

\begin{remark}
We stress that this discussion concerns only key-recovery attacks. There are other quantum generic attacks that \emph{do not} offer a quadratic speedup, and in that case, procedures based on quantum search can effectively reduce the security margins classically established. This is the case in hash function cryptanalysis~\cite{DBLP:conf/eurocrypt/HosoyamadaS20,DBLP:conf/asiacrypt/DongSSGWH20,DBLP:journals/iacr/HosoyamadaS21,DBLP:journals/tosc/ChauhanKS21}.
\end{remark}

% section prooving the security of the simplified construction
% !TEX root = ./morethanq.tex
\newcommand{\efx}{\texttt{EFX} }

\section{Proving security}
\label{sec:hcoefproof}
\subsection{Security Game}

We want to prove the super Pseudorandom property of the \efx construction based on ideal ciphers.
That means we allow an hypothetic adversary to do forward and inverse queries to ideal cipher oracles as well as an encryption oracle.
In the real world, the three keys $k, k_1, k_2$ are first randomly drawn then the encryption oracle also makes use of the ideal cipher oracles to compute the output.
In the ideal world, a new permutation is randomly drawn and used to produce the output.
This is the \secnot{sPRP} security game as in Definition~\ref{def:sprpadvantage}.

\begin{definition}[sPRP Security] \label{def:sprpadvantage}
    Let $E_{1,k}(a)$ and $E_{2,k}(a)$ be two ideal ciphers with $\kappa$-bit key $k$ and $n$-bit input $a$, and $\mathcal{P}$ be the set of all $n$ to $n$ bit permutations.
    The sPRP security game advantage of an adversary for the \efx construction is defined as:
    \begin{align*}
        \adv_\efx^{\secnot{sprp}}(\mathcal{A}) & = \pr(\mathcal{A}^{E^{1/-1}_{\cdot, \cdot}(\cdot), \efx^{1/-1}_{k, k_1, k_2}(\cdot)} \rightarrow 1) - \pr(\mathcal{A}^{E^{1/-1}_{\cdot, \cdot}(\cdot),p^{1/-1}(\cdot)} \rightarrow 1)\;.
    \end{align*}
    with the randomness of ${k, k_1, k_2} \randfrom \{0,1\}^{\kappa + 2n}$, $p \randfrom \mathcal{P}$, the ideal ciphers $E_1$, $E_2$, and $\mathcal{A}$.

    Then, the sPRP security is the maximum advantage over all adversaries $\mathcal{A}$.
\end{definition}

\subsubsection{Transcript.}
As the adversary makes queries to the oracles we record the interactions in a transcript.
We denote $\mathcal{X}$ the set of all inputs of encryption queries and outputs of decryption queries with $D = |\mathcal{X}|$ the number of online queries.
Conversely, $\mathcal{Y}$ is the set of all outputs of encryption and input of decryption.
And $\mathcal{Q}^j_i$ is the set of all inputs of forward queries and output of backward queries to the ideal cipher $E_j$ parametrized with the key $i$ with $T^j_i = |\mathcal{Q}^j_i|$ and $T = \sum_{i \in \{0,1\}^\kappa; j \in \{1,2\}} T^j_i$ the total number of offline queries.

At the end of the interaction with the oracles, we help the adversary by providing additional information before the output decision.
Hence we define the final transcript $\tau$ as:
$$ \tau = \{k, k_1, k_2\} \cup \{(x, u, y), \forall x \in \mathcal{X}\} \cup \bigcup_{i \in \{0,1\}^\kappa; j \in \{1,2\}} \{(a, b), \forall a \in \mathcal{Q}^j_i\} $$
where $b = E_{j,i}(a)$ in both real and ideal worlds.
In the real world, 
\[y = \efx_{k, k_1, k_2}(x) = k_2 \oplus E_{2,k}(k_1 \oplus E_{1,k}(x))\] 
and, after interaction, we provide for the keys $k, k_1, k_2$ as well as the intermediary values $u = E_{1,k}(x)$ for all $x \in \mathcal{X}$.
In the ideal world, $y = p(x)$ that is the output of a randomly chosen permutation and we simulate the keys and intermediate values after interaction as in Algorithm~\ref{alg:idealtranscript}.

\begin{algorithm}
    \caption{Building Ideal Transcripts}
    \label{alg:idealtranscript}
	\begin{algorithmic}[1]
	\alginput $\{(x, p(x)), \forall x \in \mathcal{X}\} \cup \bigcup_{i \in \{0,1\}^\kappa} \{(a, E_{t,i}(a)), \forall a \in \mathcal{Q}^t_i, t \in \{1,2\}\}$\;.
	\algoutput $\{k, k_1, k_2\} \cup \{(x, u), \forall x \in \mathcal{X}\} $\;.
    \Procedure{IdealTranscript}{}
      \State $\{k, k_1, k_2\} \randfrom \{0,1\}^{n+2\kappa}$
      \State $\tau^\star \leftarrow \{k, k_1, k_2\}$
      \State $\mathcal{U} \leftarrow \emptyset$
      \ForAll{$a \in \mathcal{Q}_k^1$}
        \State $\mathcal{U} \leftarrow \mathcal{U} \cup \{E_{1,k}(a)\}$
      \EndFor
      \ForAll{$a \in \mathcal{Q}_k^2$}
        \If{$a \oplus k_1 \in \mathcal{U}$}
          \If{$E_{1,k}^{-1}(a\oplus k_1) \in \mathcal{X}$ \textbf{or} $\exists x \in \mathcal{X} : E_{2,k}(a) \oplus k_2 = p(x)$}
            \State \Return $\emptyset$ \Comment{Bad Event}
          \EndIf
        \Else
          \State $\mathcal{U} \leftarrow \mathcal{U} \cup \{a \oplus k_1\}$
        \EndIf
      \EndFor
      \ForAll{$x \in \mathcal{X}$}
      \If{$x \in \mathcal{Q}_k^1$ \textbf{and} $\exists a \in \mathcal{Q}_k^2 : E_{2,k}(a) = p(x) \oplus k_2$}
        \State \Return $\emptyset$ \Comment{Bad Event}

      \ElsIf{$x \in \mathcal{Q}_k^1$}
        \State $ \tau^\star \leftarrow \tau^\star \cup \{(x, E_{1,k}(x))\} $
      \ElsIf{$\exists a \in \mathcal{Q}_k^2 : E_k(a) = p(x) \oplus k_2$}
        \State $ \tau^\star \leftarrow \tau^\star \cup \{(x, a \oplus k_1)\} $
      \Else
        \State $u \randfrom \{0,1\}^n / \mathcal{U}$
        \State $\mathcal{U} \leftarrow \mathcal{U} \cup \{u\}$
        \State $ \tau^\star \leftarrow \tau^\star \cup \{(x, u)\} $

      \EndIf
      \EndFor
      \State \Return $\tau^\star$
    \EndProcedure
    \end{algorithmic}
\end{algorithm}

\subsection{H-coefficient Technique}

To prove \autoref{th:simplesec}, we will use the H-coefficient technique of \autoref{thm:hcoefficient}.

\begin{theorem}[H-coefficient technique] \label{thm:hcoefficient}
    Let $\mathcal{A}$ be a fixed computationally unbounded deterministic adversary that has access to either the real world oracle $\mathcal{O}_{\mathrm{re}}$ or the ideal world oracle $\mathcal{O}_{\mathrm{id}}$.
    Let $\Theta = \Theta_{\mathrm{g}} \sqcup \Theta_{\mathrm{b}}$ be some partition of the set of all attainable transcripts into \emph{good} and \emph{bad} transcripts.
    Suppose there exists $\epsilon_{\mathrm{ratio}} \geq 0$ such that for any $\tau \in \Theta_{\mathrm{g}}$,
    $$\frac{\pr(X_{\mathrm{re}} = \tau)}{\pr(X_{\mathrm{id}} = \tau)} \geq 1 - \epsilon_{\mathrm{ratio}}\,,$$
    and there exists $\epsilon_{\mathrm{bad}} \geq 0$ such that $\pr(X_{\mathrm{id}} \in \Theta_{\mathrm{b}}) \leq \epsilon_{\mathrm{bad}}$. Then,
    \begin{equation}\label{h-tech}
        \pr(\mathcal{A}^{\mathcal{O}_{\mathrm{re}}}\rightarrow1) - \pr(\mathcal{A}^{\mathcal{O}_{\mathrm{id}}}\rightarrow1) \leq \epsilon_{\mathrm{ratio}} + \epsilon_{\mathrm{bad}}\,.
    \end{equation}
\end{theorem}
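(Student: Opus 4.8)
The plan is to bound the distinguishing advantage by the statistical distance between the two transcript distributions, and then to control that distance separately on the good and the bad parts of the partition. Since $\mathcal{A}$ is a fixed deterministic adversary, the bit it finally outputs is a deterministic function of the transcript $\tau$ it receives; let $S$ denote the set of attainable transcripts on which it outputs $1$. Then
\begin{equation*}
\pr(\mathcal{A}^{\mathcal{O}_{\mathrm{re}}}\rightarrow1) - \pr(\mathcal{A}^{\mathcal{O}_{\mathrm{id}}}\rightarrow1) = \sum_{\tau \in S} \bigl( \pr(X_{\mathrm{re}} = \tau) - \pr(X_{\mathrm{id}} = \tau) \bigr),
\end{equation*}
and this signed sum is maximised by choosing $S$ to contain exactly the transcripts with $\pr(X_{\mathrm{re}} = \tau) \geq \pr(X_{\mathrm{id}} = \tau)$. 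Hence the left-hand side of \eqref{h-tech} is at most the statistical distance $\tfrac{1}{2}\sum_\tau \lvert \pr(X_{\mathrm{re}} = \tau) - \pr(X_{\mathrm{id}} = \tau)\rvert$.

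Next I would use that both distributions sum to $1$ to rewrite this statistical distance in the equivalent form $\sum_{\tau} \max\bigl(0,\, \pr(X_{\mathrm{id}} = \tau) - \pr(X_{\mathrm{re}} = \tau)\bigr)$, i.e.\ as a sum over transcripts where the ideal probability exceeds the real one. Splitting this sum along the partition $\Theta = \Theta_{\mathrm{g}} \sqcup \Theta_{\mathrm{b}}$ gives two contributions. On a good transcript $\tau \in \Theta_{\mathrm{g}}$ the hypothesis $\pr(X_{\mathrm{re}} = \tau)/\pr(X_{\mathrm{id}} = \tau) \geq 1 - \epsilon_{\mathrm{ratio}}$ yields $\pr(X_{\mathrm{id}} = \tau) - \pr(X_{\mathrm{re}} = \tau) \leq \epsilon_{\mathrm{ratio}}\,\pr(X_{\mathrm{id}} = \tau)$, so the good part is at most $\epsilon_{\mathrm{ratio}} \sum_{\tau \in \Theta_{\mathrm{g}}} \pr(X_{\mathrm{id}} = \tau) \leq \epsilon_{\mathrm{ratio}}$. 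On a bad transcript I simply drop the nonnegative term $\pr(X_{\mathrm{re}} = \tau)$, so the bad part is at most $\sum_{\tau \in \Theta_{\mathrm{b}}} \pr(X_{\mathrm{id}} = \tau) = \pr(X_{\mathrm{id}} \in \Theta_{\mathrm{b}}) \leq \epsilon_{\mathrm{bad}}$. Adding the two bounds gives exactly $\epsilon_{\mathrm{ratio}} + \epsilon_{\mathrm{bad}}$, as claimed.

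This is the standard H-coefficient technique, so I do not expect a deep obstacle; the only points requiring care are bookkeeping. First, one must restrict throughout to \emph{attainable} transcripts, so that $\pr(X_{\mathrm{id}} = \tau) > 0$ and the ratio defining $\epsilon_{\mathrm{ratio}}$ is well defined. Second, the reduction to statistical distance relies on $\mathcal{A}$ being deterministic, which matches the hypothesis of the theorem; a randomized adversary would first be handled by fixing its best coins, which can only increase the advantage and reduces to the deterministic case. Third, one should verify that rewriting the statistical distance in terms of the ideal-minus-real excess is legitimate, which is precisely where the identity $\sum_\tau \pr(X_{\mathrm{re}} = \tau) = \sum_\tau \pr(X_{\mathrm{id}} = \tau) = 1$ enters. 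The mild subtlety, and the step I would double-check most carefully, is that the good-transcript bound must account for transcripts where $\pr(X_{\mathrm{id}} = \tau) \leq \pr(X_{\mathrm{re}} = \tau)$ as well; these contribute $0$ to the excess sum and are safely upper bounded by the same expression $\epsilon_{\mathrm{ratio}}\,\pr(X_{\mathrm{id}} = \tau)$.
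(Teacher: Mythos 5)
Your proof is correct and is the standard argument for the H-coefficient technique: reduce the advantage of a fixed deterministic adversary to the statistical distance between the two transcript distributions, rewrite that distance as the ideal-minus-real excess $\sum_\tau \max\bigl(0,\, \pr(X_{\mathrm{id}} = \tau) - \pr(X_{\mathrm{re}} = \tau)\bigr)$, and bound it separately on good transcripts via the ratio hypothesis and on bad transcripts via $\epsilon_{\mathrm{bad}}$. Note that the paper states Theorem~\ref{thm:hcoefficient} without proof, treating it as a known tool (it is Patarin's H-coefficient technique in the form popularized by Chen and Steinberger), so there is no in-paper proof to compare against; your argument matches the canonical one, including the two points that genuinely require care --- restricting to attainable transcripts so the ratio is well defined, and observing that good transcripts with $\pr(X_{\mathrm{id}} = \tau) \leq \pr(X_{\mathrm{re}} = \tau)$ contribute zero to the excess.
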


\subsubsection{Bad Transcripts}

A transcript is said to be bad when Algorithm~\ref{alg:idealtranscript} return the empty set or when $T^1_k + T^2_k > \alpha T / 2^\kappa$ for some value $\alpha$ to be determined later.
Equivalently, a transcript is said to be bad when either $T^1_k + T^2_k > \alpha T / 2^\kappa$ or 
$\exists (a, b, x) \in  \mathcal{Q}^1_k \times \mathcal{Q}^2_k \times \mathcal{X} : a = b \oplus k_1, E_{1,k}(x) = a$ or 
$\exists (a, b, y) \in  \mathcal{Q}^1_k \times \mathcal{Q}^2_k \times \mathcal{Y} : a = b \oplus k_1, y = E_{2,k}(b)$ or
$\exists (x, a) \in (\mathcal{X} \cap \mathcal{Q}^1_k) \times \mathcal{Q}^2_k : E_{1,k}(a) = p(x) \oplus k_2$.

Firstly, we bound the probability of $T^1_k + T^2_k > \alpha T / 2^\kappa$ with the randomness of $k$ using the Markov inequality:
\begin{align}
  \pr( T^1_k + T^2_k > \alpha T / 2^\kappa ) \leq & 1 / \alpha
\end{align}

Then we bound the probability of $\exists (a, b, x) \in  \mathcal{Q}^1_k \times \mathcal{Q}^2_k \times \mathcal{X} : a = b \oplus k_1, E_{1,k}(x) = a$ with the randomness of $k$ and $k_1$:

\begin{align*}
  &\pr(\exists (a, b, x) \in  \mathcal{Q}^1_k \times \mathcal{Q}^2_k \times \mathcal{X} : a = b \oplus k_1, E_{1,k}(x) = a)\\
  = & \sum_{i \in \{0,1\}^\kappa} \pr(\exists (a, b, x) \in  \mathcal{Q}^1_i \times \mathcal{Q}^2_i \times \mathcal{X} : a = b \oplus k_1, E_{1,i}(x) = a) \pr( k = i ) \\
  \leq & 2^{-\kappa}\sum_{i \in \{0,1\}^\kappa} \min\left(\frac{\min(T^1_i, D) \cdot T^2_i}{2^n}, 1\right) \\
  \leq & 2^{-\kappa-n}\sum_{i \in \{0,1\}^\kappa} \min\left(T^1_i \cdot T^2_i, D \cdot T^2_i, 2^n\right)
\end{align*}
As we wish to get a born depending on $T$ but not on the repartition of the offline queries $T^j_i$, we assume the worst case that is the repartition giving the highest value.
Notice that $\sum_i (T^1_i \cdot T^2_i)$ can be optimized by maximizing a few terms.
In our case, we obtain an upper-bound by letting $T^1_i = T^2_i = \min(D, 2^{n/2})$ for $T/(2 \cdot \min(D, 2^{n/2}))$ different values of $i$ and $T^1_i = T^2_i = 0$ otherwise (the strategy of optimizing the values up to $\max(2^n/D, 2^{n/2})$ gives the same bound.):

\begin{align}
  \pr(\exists (a, b, x) \in  \mathcal{Q}^1_k \times \mathcal{Q}^2_k \times \mathcal{X} : a = b \oplus k_1, E_{1,k}(x) = a)
 \leq & \frac{T \cdot \min(D, 2^{n/2})}{2^{\kappa + n +1}}
\end{align}

We can derive the same bound the same way for the two remaining bad events $\exists (a, b, y) \in  \mathcal{Q}^1_k \times \mathcal{Q}^2_k \times \mathcal{Y} : a = b \oplus k_1, y = E_{2,k}(b)$ and
$\exists (x, a) \in (\mathcal{X} \cap \mathcal{Q}^1_k) \times \mathcal{Q}^2_k : E_{1,k}(a) = p(x) \oplus k_2$.
Putting it together:

\begin{align}
  \epsilon_{\mathrm{bad}} = \pr(\tau \text{ is bad}) & \leq \frac{1}{\alpha} + 3\frac{T \cdot \min(D, 2^{n/2})}{2^{\kappa + n +1}}
\end{align}

\subsubsection{Good Transcripts}

Assuming that $\tau$ is a good transcript, we want to upper-bound the ratio between the probabilities of $\tau$ happening in the real world and in the ideal world.
Let $\mathcal{A} = \{ E_{2,k}(a) \oplus k_2 : a \in \mathcal{Q}^2_k \} $ and $\mathcal{B} = \{ E_{1,k}(b) \oplus k_1 : b \in \mathcal{Q}^1_k \} $.

In the real world, the probability comes from the drawing of the keys and from every fresh queries to the ideal block cipher oracles:

\begin{align*}
&1/\pr(X_{\mathrm{re}} = \tau) = \\
&2^{\kappa + 2n}\left(\prod_{i \in \{0,1\}^\kappa / \{k\}; j \in \{1,2\}} (2^n)_{(T^j_i)}\right) \left( (2^n)_{(|\mathcal{Q}^1_k \cup \mathcal{X}|)} \cdot (2^n)_{(|\mathcal{A} \cup \mathcal{Y}|)} \right)
\end{align*}

In the ideal world, the probability comes from the ideal block cipher oracles, the encryption oracle and Algorithm~\ref{alg:idealtranscript}:
\begin{align*}
&1/\pr(X_{\mathrm{id}} = \tau) = \\
&2^{\kappa + 2n}\left(\prod_{i \in \{0,1\}^\kappa; j \in \{1,2\}} (2^n)_{(T^j_i)}\right)
\left((2^n)_{(D)} \cdot 
(2^n - |\mathcal{B} \cup \mathcal{Q}^2_k|)_{(D - |\mathcal{Q}^1_k \cap \mathcal{X}| - |\mathcal{A} \cap \mathcal{Y}|)}\right)
\end{align*}

Putting it together:
\begin{multline*}
  \frac{\pr(X_{\mathrm{re}} = \tau)}{\pr(X_{\mathrm{id}} = \tau)} \geq
  \frac{(2^n)_{(T^1_k)}(2^n)_{(T^2_k)}(2^n)_{(D)}(2^n - |\mathcal{B} \cup \mathcal{Q}^2_k|)_{(D - |\mathcal{Q}^1_k \cap \mathcal{X}| - |\mathcal{A} \cap \mathcal{Y}|)}}
  { (2^n)_{(|\mathcal{Q}^1_k \cup \mathcal{X}|)} (2^n)_{(|\mathcal{A} \cup \mathcal{Y}|)}}\\
  \geq
  \frac{(2^n)_{(T^1_k)}(2^n)_{(T^2_k)}(2^n)_{(D)}(2^n -T^1_k -T^2_k + |\mathcal{B} \cap \mathcal{Q}^2_k|)_{(D - |\mathcal{Q}^1_k \cap \mathcal{X}| - |\mathcal{A} \cap \mathcal{Y}|)}}
  { (2^n)_{(D + T^1_k - |\mathcal{Q}^1_k \cap \mathcal{X}|)} (2^n)_{(D + T^2_k - |\mathcal{A} \cap \mathcal{Y}|)}}\\
  \geq
  \frac{(2^n)_{(T^1_k)}(2^n)_{(T^2_k)}(2^n)_{(D)}(2^n -T^1_k -T^2_k)_{(D - |\mathcal{Q}^1_k \cap \mathcal{X}| - |\mathcal{A} \cap \mathcal{Y}|)}}
  { (2^n)_{(D + T^1_k - |\mathcal{Q}^1_k \cap \mathcal{X}|)} (2^n)_{(D + T^2_k - |\mathcal{A} \cap \mathcal{Y}|)}}
\end{multline*}

First notice that when $D = 2^n$ then $\mathcal{X} = \mathcal{Y} = \{0,1\}^n$ and we have $\frac{\pr(X_{\mathrm{re}} = \tau)}{\pr(X_{\mathrm{id}} = \tau)} \geq 1$.
Thus we can derive a first bound independent of $D$ ignoring the first bad event (or taking a very high $\alpha$):
$$ \adv_\efx^{\secnot{sprp}}(\mathcal{A}) \leq  3\frac{T}{2^{\kappa + n/2 +1}}$$

We have to work a bit more to get a bound for when $D \leq 2^{n/2}$:
\begin{align*}
  \frac{\pr(X_{\mathrm{re}} = \tau)}{\pr(X_{\mathrm{id}} = \tau)} 
  &\geq
  \frac{(2^n)_{(T^1_k)}(2^n)_{(T^2_k)}(2^n)_{(D)}(2^n -T^1_k -T^2_k)_{(D)}}
  { (2^n)_{(D + T^1_k)} (2^n)_{(D + T^2_k)}}\\
  &\geq
  \frac{(2^n)_{(D)}(2^n -T^1_k -T^2_k)_{(D)}}
  { (2^n -T^1_k)_{(D)} (2^n -T^2_k)_{(D)}}\\
  &\geq
  \left(\frac{(2^n -D +1)(2^n -T^1_k -T^2_k -D +1)}
  { (2^n -T^1_k -D +1) (2^n -T^2_k -D +1)}\right)^D\\
  &\geq
  \left(1 + \frac{T^1_k \cdot T^2_k}
  { (2^n -D +1)(2^n -T^1_k -T^2_k -D +1)}\right)^{-D}\\
  &\geq
  1 - \frac{D \cdot T^1_k \cdot T^2_k}
  { (2^n -D +1)(2^n -T^1_k -T^2_k -D +1)}\\
\end{align*}

Adding the fact that $T^1_k + T^2_k$ is upper-bounded by $\alpha T / 2^{\kappa}$ we get:

\begin{align}
  \epsilon_{\mathrm{ratio}} \leq \frac{\alpha^2 T^2 D}
  {2^{2\kappa + 2}(2^n-D+1)(2^n - \alpha T / 2^{\kappa} -D+1)}
\end{align}

\subsubsection{Conclusion}

Hence using the H-coefficient Technique of \autoref{thm:hcoefficient} we get two upper-bound for the advantage of a classical information theoretic adversary.
One mostly useful for $D \leq 2^{n/2}$:

$$ \adv_\efx^{\secnot{sprp}}(\mathcal{A}) \leq \frac{1}{\alpha} + 3\frac{T \cdot \min(D, 2^{n/2})}{2^{\kappa + n +1}} + \frac{\alpha^2 T^2 D}
{2^{2\kappa + 2}(2^n-D+1)(2^n - \alpha T / 2^{\kappa} -D+1)} $$

And one independent of $D$:
$$ \adv_\efx^{\secnot{sprp}}(\mathcal{A}) \leq  3\frac{T}{2^{\kappa + n/2 +1}}$$

Note that we are free to choose the value $\alpha$ to optimize the bound.
We decided to take $1/\alpha = \left(T^{2}D / 2^{2(\kappa + n)}\right)^{\frac{1}{3}}$ and that concludes the proof of \autoref{thm:attack-efx}.

% !TEX root = ./morethanq.tex

\section{Quantum lower bounds}\label{section:quantum-lb}

In this section, we prove quantum lower bounds on the security of \efx{}, for completeness. The bounds we obtain are weak, in the sense that the security of the construct matches the security of its underlying primitive. However, they are also tight, as the \secnot{offline-Simon} algorithm makes for a matching upper bound.

We prove security in the \emph{quantum ideal cipher model}, introduced in~\cite{DBLP:conf/asiacrypt/HosoyamadaY18}. As for the ideal cipher model, we allow encryption and decryption queries to the block cipher $E^{\pm}$ and to the construction $C^{\pm}$. The only difference is that quantum queries are allowed, instead of classical queries. This means we prove security with quantum access to the construction $C^{\pm}$, which implies the bound when access to $C^{\pm}$ is only classical.

We note $\mathcal{C}(k,n)$ the distributions of $n$-bit block, $k$-bit key block ciphers, and $\mathcal{P}(n)$ the distribution of $n$-bit permutations. We will prove in this section the indistiguishability between $E^{\pm}, C^{\pm}$ with $E^{\pm}\in \mathcal{C}(k,n)$ and $E^{\pm}, P^{\pm}$ with $E^{\pm}\in \mathcal{C}(m,n)$ and $P^{\pm} \in \mathcal{P}(n)$ up to $2^{\kappa /2}$ queries.

We rely on the hardness of unstructured search:
\begin{lemma}[Optimality of Grover's algorithm{~\cite{Zalka99}}]\label{lemma:grover}
 Let $D_0$ be the degenerate distribution containing only the $\kappa$-bit input all-zero function,
 and $D_1$ be the distribution of $\kappa$-bit input boolean functions with only one output equal to 1. Then, for any quantum adversary $\mathcal{A}$ that does at most $q$ queries,
 \[
  \adv^{\secnot{dist}}_{D_0,D_1}(\mathcal{A}) \leq \frac{4q^2}{2^\kappa}\quad.
 \] 
\end{lemma}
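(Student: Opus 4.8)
The plan is to prove the standard optimality-of-search bound (the BBBV/Zalka lower bound), phrased here as an indistinguishability statement. First I would fix a $q$-query adversary $\mathcal{A}$ and write its computation as an alternation of input-independent unitaries $U_0, U_1, \ldots, U_q$ with oracle calls, realizing the oracle as a phase oracle: querying the all-zero function acts as the identity, while querying the function marked at $x^* \in \zo^\kappa$ flips the sign of the component on $\ket{x^*}$. Let $\ket{\psi_t}$ be the state just before the $(t{+}1)$-th query in the all-zero ($D_0$) run, and $\ket{\psi^{x^*}_t}$ the corresponding state in the run marked at $x^*$. Since in $D_1$ the marked point is uniform, Helstrom's bound gives $\adv^{\secnot{dist}}_{D_0,D_1}(\mathcal{A}) \le \tfrac12\|\rho_1 - \rho_0\|_1$, where $\rho_0 = \ket{\psi_q}\bra{\psi_q}$ is the final $D_0$ state and $\rho_1 = \frac{1}{2^\kappa}\sum_{x^*} \ket{\psi^{x^*}_q}\bra{\psi^{x^*}_q}$ is the averaged $D_1$ mixture, so it suffices to bound $\tfrac12\|\rho_1 - \rho_0\|_1$.

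The key steps are a hybrid expansion followed by an averaging argument. Writing $\ket{\psi^{x^*}_q} = \ket{\psi_q} + \ket{e_{x^*}}$ and propagating the single sign-flip forward through the later unitaries, I would obtain $\delta_{x^*} := \|\ket{e_{x^*}}\| \le 2\sum_{t=0}^{q-1} a_{x^*,t}$, where $a_{x^*,t} = \|\Pi_{x^*}\ket{\psi_t}\|$ is the query amplitude on $x^*$ at step $t$ and $\Pi_{x^*}$ projects the query register onto $x^*$. Normalisation gives $\sum_{x^*} a_{x^*,t}^2 = 1$ for each $t$, hence $\sum_{x^*}\sum_t a_{x^*,t}^2 = q$. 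By Cauchy--Schwarz over the $q$ summands, $\delta_{x^*}^2 \le 4q\sum_t a_{x^*,t}^2$, so the incoherent (positive) part of $\rho_1-\rho_0$ has trace $\frac{1}{2^\kappa}\sum_{x^*}\delta_{x^*}^2 \le \frac{4q}{2^\kappa}\sum_{x^*}\sum_t a_{x^*,t}^2 = \frac{4q^2}{2^\kappa}$. This already produces the quadratic dependence we want.

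The step I expect to be the main obstacle is the coherent cross-term. Expanding $\rho_1 - \rho_0 = \ket{\psi_q}\bra{\bar e} + \ket{\bar e}\bra{\psi_q} + \frac{1}{2^\kappa}\sum_{x^*}\ket{e_{x^*}}\bra{e_{x^*}}$ with $\ket{\bar e} = \frac{1}{2^\kappa}\sum_{x^*}\ket{e_{x^*}}$, the last term is the incoherent part bounded above, but a naive triangle-inequality estimate on $\ket{\bar e}$ only yields $\|\bar e\| \le \frac{1}{2^\kappa}\sum_{x^*}\delta_{x^*} = \bigO{q/2^{\kappa/2}}$, which is far too large and would swamp the quadratic term. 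The $q^2$ bound survives only through cancellation: to leading order $\ket{e_{x^*}} \approx -2\sum_t \Pi_{x^*}\ket{\psi_t}$ propagated forward, and summing over $x^*$ uses $\sum_{x^*}\Pi_{x^*} = \mathbf{1}$ to collapse the marked-register projectors, leaving $\ket{\bar e}$ essentially proportional to $\ket{\psi_q}$ with coefficient $\bigO{q/2^\kappa}$. Making this rigorous means controlling the full perturbation series rather than only its first order, which is the delicate point; once done, the cross-term contributes only $\bigO{q/2^\kappa}$, the incoherent term dominates, and the advantage is $\le \frac{4q^2}{2^\kappa}$.

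As an alternative that sidesteps this bookkeeping, I would fall back on the polynomial method of Beals et al.: the acceptance probability of $\mathcal{A}$ is a multilinear polynomial of degree at most $2q$ in the $2^\kappa$ input bits, and symmetrising over the uniformly random marked location yields a univariate polynomial $\widetilde P$ of degree at most $2q$ taking values in $[0,1]$ on $\{0,1,\ldots,2^\kappa\}$. The advantage is then $|\widetilde P(1) - \widetilde P(0)|$, and the Markov brothers' inequality, applied near the endpoint of the interval, bounds this first finite difference by $\bigO{(2q)^2/2^\kappa}$, reproducing the same $\bigO{q^2/2^\kappa}$ order. This route is easier to make fully rigorous, at the cost of a slightly looser constant than the direct hybrid argument.
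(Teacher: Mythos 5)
The paper itself gives no proof of this lemma: it is imported wholesale from Zalka's tight analysis of quantum search, so there is no in-paper argument to compare against, and a self-contained proof is a genuine addition. Your diagnosis of the difficulty is exactly right and worth stating: the claimed bound is $O(q^2/2^\kappa)$, not $O(q/2^{\kappa/2})$, so the plain BBBV hybrid estimate --- bound the trace distance to each marked run by $2\sum_t a_{x^*,t}$ and average over $x^*$ --- is provably insufficient, and the cancellation in the averaged error vector via $\sum_{x^*}\Pi_{x^*} = \mathbf{1}$ is the crux. But your first route stops precisely where the work begins: ``controlling the full perturbation series'' is not bookkeeping, it is the entire content of the quadratic bound. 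The error vector $\ket{e_{x^*}}$ is generated by $q$ interleaved sign flips, so its expansion beyond first order contains terms involving $\Pi_{x^*}$ applied to the \emph{perturbed} intermediate states $\ket{\psi^{x^*}_t}$ rather than $\ket{\psi_t}$; these are of size $\delta_{x^*}^2$ individually, and one must show that their average over $2^\kappa$ values of $x^*$ also collapses to $O(q^2/2^\kappa)$ rather than merely $O(q^2/2^{\kappa/2}\cdot 2^{-\kappa/2})$-type estimates gone wrong. As written, route one is a plan, not a proof.

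Your fallback via the polynomial method is the route that closes, and with one standard repair it is fully rigorous --- and tighter than you claim. The subtlety you should address: the symmetrized polynomial $\widetilde P$ is only known to lie in $[0,1]$ at the integer points $\{0,\ldots,2^\kappa\}$, whereas Markov's inequality requires boundedness on the whole interval, so ``applied near the endpoint'' does not literally parse. The fix is to work with $Q(\mu) = P(\mu,\ldots,\mu)$, the multilinear acceptance polynomial with every input bit set to $\mu$: by multilinearity $Q(\mu)$ equals the expected acceptance probability under i.i.d.\ Bernoulli($\mu$) inputs, hence $Q(\mu)\in[0,1]$ for all $\mu\in[0,1]$, and $\deg Q \leq 2q$. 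Writing the symmetrization in the binomial basis, $\widetilde P(w) = \sum_j c_j \binom{w}{j}$, the advantage is $|\widetilde P(1)-\widetilde P(0)| = |c_1|$, while $Q'(0) = c_1 2^\kappa$; Markov's inequality on $[0,1]$ gives $|Q'| \leq (2q)^2$, so the advantage is at most $(2q)^2/2^\kappa = 4q^2/2^\kappa$ --- exactly the constant in the lemma, not a looser one. I would adopt this second route as the actual proof and demote the hybrid argument to motivating discussion.
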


We will now reduce the problem of distinguishing the construction from a random permutation to the unstructured search distinguisher.

\begin{lemma}[Distinguishing the \efx construction]
 Let $E_1, E_2 \randfrom \mathcal{C}(\kappa,n)^2$, $P\randfrom\mathcal{P}(n)$, $K_1,K_2\randfrom\zo^{\kappa+n}$, $\efx = E_2(K_1,E_1(K_1,x\oplus K_2) \oplus K_2)$. Then for any quantum adversary $\mathcal{A}$ that does at most $q$ queries,
 \[
  \adv^{\secnot{dist}}_{(E_1,E_2,\efx),(E_1,E_2,P}(\mathcal{A}) \leq \frac{4q^2}{2^\kappa}\quad.
 \]

\end{lemma}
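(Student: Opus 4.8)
The plan is to reduce the distinguishing problem to the unstructured search problem of \autoref{lemma:grover}, so that the bound $\frac{4q^2}{2^\kappa}$ transfers directly. The key observation is that the only place where $\efx$ differs from a freshly drawn random permutation is at the single key $k$: the construction $\efx$ invokes the ideal ciphers $E_1, E_2$ precisely at key $k$ (with whitening $K_1, K_2$), whereas for every other key $i \neq k$ the tables $E_1(i, \cdot), E_2(i, \cdot)$ are independent random permutations that carry no information about the construction. I would make this precise by arguing that, conditioned on the adversary never querying $E_1$ or $E_2$ at the secret key $k$, the real world $(E_1, E_2, \efx)$ and the ideal world $(E_1, E_2, P)$ are \emph{identically distributed}. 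Indeed, in the real world $\efx$ is $E_2(k, \cdot)$ composed with affine maps and $E_1(k, \cdot)$; if the adversary never touches row $k$ of the cipher tables, then $E_1(k,\cdot)$ and $E_2(k,\cdot)$ act as fresh random permutations uncorrelated with anything else the adversary sees, so their composition with the whitenings is itself a uniformly random permutation — exactly the distribution of $P$ in the ideal world.

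Given that equivalence, distinguishing the two worlds is no easier than detecting whether the adversary's queries to the ideal cipher ever ``hit'' the hidden key $k$. This is exactly a search problem: think of the $\kappa$-bit string $k$ as the unique marked input of a boolean function that returns $1$ on key $i$ iff $i = k$, and $0$ otherwise. The adversary's queries to $E_1, E_2$ correspond to queries to this hidden indicator function, and the random-permutation world corresponds to the all-zero function $D_0$ of \autoref{lemma:grover}, while the real world (restricted to its dependence on $k$) corresponds to the single-marked-input distribution $D_1$. The formal step is to build, from any distinguisher $\mathcal{A}$ making $q$ queries, a search distinguisher making $q$ queries with at least the same advantage, by simulating all the oracles ($E_1, E_2$ on the non-$k$ rows, the construction/permutation, and the affine whitenings) internally while forwarding the relevant row-$k$ accesses to the search oracle. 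Then \autoref{lemma:grover} yields $\adv^{\secnot{dist}} \leq \frac{4q^2}{2^\kappa}$.

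The main obstacle is the simulation bookkeeping in the quantum setting. Because queries are made in superposition, I cannot simply ``watch'' for a classical query to key $k$; instead I must realize the oracles as unitaries that behave correctly on \emph{all} branches simultaneously, so that the reduction's oracle calls are in exact correspondence with the search oracle. Concretely, the challenge is to exhibit a single unitary simulation that (i) answers any query to $E_1, E_2, \efx$ and their inverses consistently, (ii) uses the search oracle exactly once per adversary query, and (iii) produces output states that coincide in the two worlds whenever the marked key is never flagged. The cleanest way to handle this is to lazily sample the ideal cipher and permutation via the quantum indistinguishability of lazy sampling (e.g.\ a compressed-oracle or standard deferred-sampling argument), so that the whole experiment is a well-defined unitary and the real/ideal gap is provably governed solely by the search oracle. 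Once the simulation is set up, the advantage bound follows immediately from \autoref{lemma:grover}.
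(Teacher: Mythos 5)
Your high-level target is the same as the paper's: hide the secret cipher key as the unique marked point of a boolean function and invoke the optimality of Grover's algorithm (\autoref{lemma:grover}). But the proposal stalls exactly where the proof has to do work. First, your reduction plan --- simulate $E_1, E_2$ on the non-$k$ rows internally ``while forwarding the relevant row-$k$ accesses to the search oracle'' --- cannot be implemented as stated: the search oracle of \autoref{lemma:grover} returns a single bit $f(K)$, so it can tell a computational branch that it has hit the marked key, but it cannot supply values of $E_1(k,\cdot)$ and $E_2(k,\cdot)$ that are consistent with the construction oracle; moreover the simulator does not know $k$, so it cannot even identify which accesses to forward. Second, your step~(1), ``conditioned on never querying key $k$ the two worlds are identically distributed,'' is a classical identical-until-bad argument that, as you yourself note, has no direct quantum meaning; deferring the repair to ``a compressed-oracle or standard deferred-sampling argument'' names a technology but not a proof --- that route (essentially a one-way-to-hiding argument) is never carried out, and it is also unnecessary.

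What is missing is the reprogramming step that the paper uses, which eliminates any conditioning. One observes that $(E_1,E_2,\efx)$ is \emph{exactly} distributed as $(F,E_2,P)$, where $P\randfrom\mathcal{P}(n)$ answers the construction oracle directly, and $F$ agrees with a fresh ideal cipher on all keys except the marked one, whose row is redefined in terms of objects the reduction samples itself --- namely $F(K,y)=E_2^{-1}(K,P(y\oplus K_2))\oplus K_2$, so that the construction built from $(F,E_2)$ collapses to $P$. (The reprogrammed row is a uniformly random permutation because $P$ is, so $F$ is a uniform ideal cipher.) With this identity, both worlds present the construction as the \emph{same} uniformly random $P$; the ideal world corresponds to $f$ all-zero and the real world to $f$ with a unique $1$; and every forward or inverse oracle query is answered coherently using exactly one superposition query to $f$, whose output serves as a control bit selecting between the two branches of $F$. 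The bound $4q^2/2^\kappa$ then follows verbatim from \autoref{lemma:grover}. Without this exact resampling, your reduction has no mechanism for answering row-$k$ queries consistently with the construction oracle, and the lemma does not follow from what you have written.
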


\begin{proof}
 We want to reduce this distinguishing problem to the previous one. First, as in the classical proof, we can remark that the distributions of $E_1, E_2, \efx$ is equal to the distribution of $F, E_2, P$ with $E_2 \randfrom \mathcal{C}(k,n)$, $P\randfrom \mathcal{P}(n)$, $F(K_1,x) = P(E_2^{-1}(K_1,x\oplus K_2)\oplus K_2)$, and for other $K$, $F(K,x) \randfrom \mathcal{P}(n)$.
 
 Hence, we can consider the following construction: we take $E_1, E_2\in\mathcal{C}(\kappa,n)^2$, $P\in \mathcal{P}(n)$, $f\in \zo^\kappa\to \zo$, $K_2 \in \zo^n$. We can construct
 \[
  F(K,x) = \left\{\begin{tabular}{ll} $P(E_2^{-1}(K,x\oplus K_2)\oplus K_2)$ & {if $f(K) = 1$}\\
                                    $E_1(K,x)$ & {otherwise}
                  \end{tabular}
\right.\quad.
 \]
Now, we can leverage \autoref{lemma:grover} on the distribution of $(F, E_2, P)$: if $f$ is all-zero, we have the distribution of $(E_1, E_2, P)$. If $f$ has a unique 1, we have the distribution of $(E_1, E_2, \efx)$. Hence, any adversary that distinguishes $\efx$ can also distinguish unstructured search.
\end{proof}

\begin{remark}
 This proof can be directly adapted to the case where we only have one cipher, but two related keys are used.
\end{remark}

\begin{remark}[Tightness]
 This bound is tight when quantum query access is allowed. With only classical query access, the attack matches the bound only when $n \leq \kappa/2$. To prove security for smaller $n$ (or with a lower amount of classical data), one could adapt the quantum security proofs for the FX construction~\cite{DBLP:journals/corr/abs-2105-01242}, as the construction of interest is FX plus an additional encryption.
\end{remark}

\end{document}